\DeclareMathAlphabet{\mathpzc}{OT1}{pzc}{m}{it}
\theoremstyle{plain}
\newtheorem{thm}{Theorem}[section] 
\theoremstyle{definition}
\newtheorem{defn}[thm]{Definition} 
\newtheorem{lem}[thm]{Lemma}
\newtheorem{rem}[thm]{Remark}
\newtheorem{cor}[thm]{Corollary}
\def\XXint#1#2#3{{\setbox0=\hbox{$#1{#2#3}{\int}$ }
		\vcenter{\hbox{$#2#3$ }}\kern-.6\wd0}}
\newcounter{MPequ}
\newcounter{AppA}
\newcounter{AppB}
\newcounter{AppC}
\newcounter{AppD}
\newcounter{AppE}
\begin{document}\selectlanguage{english}
\begin{center}
\normalsize \textbf{\textsf{Charged particle motion in a strong magnetic field: Applications to plasma confinement}}
\end{center}
\begin{center}
	Ugo Boscain$^{\star,}$\footnote{\textit{E-mail address:} \href{mailto:Ugo.boscain@inria.fr}{ugo.boscain@inria.fr}}, Wadim Gerner$^{\dagger,}$\footnote{\textit{E-mail address:} \href{mailto:wadim.gerner@edu.unige.it}{wadim.gerner@edu.unige.it}}
\end{center}
\begin{center}
{\footnotesize Laboratoire Jacques-Louis Lions, Sorbonne Universit\'{e}, Universit\'{e} de Paris, CNRS, Inria, Paris, France$^{\star}$}
\newline
\newline
{\footnotesize MaLGa Center, Department of Mathematics, Department of Excellence 2023-2027, University of Genoa, Via Dodecaneso 35, 16146 Genova, Italy$^{\dagger}$}
\end{center}
{\small \textbf{Abstract:} 
We derive the zero order approximation of a charged particle under the influence of a strong magnetic field in a mathematically rigorous manner and clarify in which sense this approximation is valid. We use this to further rigorously derive a displacement formula for the pressure of plasma equilibria and compare our findings to results in the physics literature. The main novelty of our results is a qualitative estimate of the confinement time for optimised plasma equilibria with respect to the gyro frequency. These results are of interest in the context of plasma fusion confinement.
\newline
\newline
{\small \textit{Keywords}: Charged particle motion, Dynamical systems, Neoclassical transport, Plasma physics, Stellarator}
\newline
{\small \textit{2020 MSC}: 34A45, 35Q85, 37N20, 41A25, 76W05, 78A30, 78A35, 78A55}
\section{Introduction and main results}
\label{I1}
The motion of a charged particle within a magnetic field is governed by the Lorentz force
\begin{gather}
	\nonumber
	F_L=q v\times \mathbf{B}(x)
\end{gather}
where $x=x(t)$ is the particle position, $v=v(t)=\dot{x}(t)$ is the particle velocity, $\mathbf{B}$ is the magnetic field influencing the particle movement and $q$ is the electrical charge of the particle. In addition, Newton's law tells us that $m\ddot{x}=F_L$ and consequently we obtain the equation of motion
\begin{gather}
	\nonumber
	\ddot{x}=\frac{q}{m}\dot{x}\times \mathbf{B}(x).
\end{gather}
To be able to easier compare our findings with the physics literature, we introduce a reference magnetic field strength $|\mathbf{B}_{\operatorname{ref}}|$ which may be regarded as the average magnetic field strength in the region of interest. We then define the gyro-frequency of the particle as $\omega:=\frac{q|\mathbf{B}_{\operatorname{ref}}|}{m}$ and the normalised magnetic field $B:=\frac{\mathbf{B}}{|\mathbf{B}_{\operatorname{ref}}|}$ and arrive at the equation
\begin{gather}
	\label{IE1}
	\ddot{x}_{\omega}=\omega \dot{x}_{\omega}\times B(x_{\omega})
\end{gather}
where the index $\omega$ indicates the dependence of the solution on the parameter $\omega$. In the context of plasma fusion confinement devices one uses a strong magnetic field in order to confine the hot plasma so that the gyro frequency, being proportional to the (average) magnetic field strength, will be very large. It is therefore customary in the physics literature to consider expansions of $x_{\omega}$ for large $\omega\gg 1$, see for instance \cite[Chapter 4]{IGPW24}, \cite[Chapter 3]{Bitt04}. We highlight that the orthogonality of the Lorentz force to the velocity of the particle implies that $|v_{\omega}(t)|^2=|\dot{x}_{\omega}(t)|^2=|v_0|^2$ for all $t\in \mathbb{R}$. This means that the kinetic energy of the particle is preserved in time and thus suggests the possible existence of a limit trajectory.

To be more precise, assume that $B$ is regular enough, fix initial conditions $x_0,v_0\in \mathbb{R}^3$ and let $x_{\omega}$ denote the unique solution to (\ref{IE1}) for given $\omega>0$ (the case of negatively charged particles, i.e. $\omega<0$, can be recovered from the case $\omega>0$ by time reversal). We are then, loosely speaking, interested in finding a Taylor expansion of $x_{\omega}$ around $\frac{1}{\omega}\approx 0$.

Due to the practical relevance for the design of plasma fusion confinement devices, this problem has been thoroughly investigated in the physics literature. However, to the best of our knowledge, a rigorous mathematical treatment of this problem has not yet been conducted.

We focus in the present work, solely on the zero order expansion.  
We recall that the normal particle drift (which makes plasma confinement challenging) appears only in the first order approximation, c.f. \cite[Chapter 4]{IGPW24}. However, the zero order approximation is enough to rigorously derive a formula which imposes constraints on plasma equilibria to be suitable for plasma confinement. 

Before we state our first main result let us consider a simple example. We consider the magnetic field $\mathbf{B}(x):=b(0,0,1)$ where $b\in (0,\infty)$ is a fixed constant. We can then set $|\mathbf{B}_{\operatorname{ref}}|:=b$ and consequently define the gyro frequency by $\omega=\frac{qb}{m}$ where $q$ is the charge and $m$ is the mass of our plasma particle. One can then explicitly solve (\ref{IE1}) and finds the solution
\begin{gather}
	\nonumber
	x_{\omega}(t)=\left(x_{0,1}+\frac{v_{0,2}+v_{0,1}\sin(\omega t)-v_{0,2}\cos(\omega t)}{\omega},x_{0,2}+\frac{-v_{0,1}+v_{0,1}\cos(\omega t)+v_{0,2}\sin(\omega t)}{\omega},x_{0,3}+t v_{0,3}\right)
\end{gather}
where $x_0=(x_{0,1},x_{0,2},x_{0,3})$ and $v_0=(v_{0,1},v_{0,2},v_{0,3})$. This is a helical motion of frequency $\omega$ and radius of order $\frac{1}{\omega}$ around a main motion which is a straight line, \Cref{Spirale}. This justifies the name gyro frequency for $\omega$. Further we observe that $x_{\omega}(t)\rightarrow (x_{0,1},x_{0,2},x_{0,3}+tv_{0,3})$ in $C^{0}(\mathbb{R})$ but that $\dot{x}_{\omega}(t)$ does not converge to $(0,0,v_{0,3})$ in $C^0$.

\begingroup\centering
\begin{figure}[H]
	\hspace{5.5cm}\includegraphics[width=0.2\textwidth]{./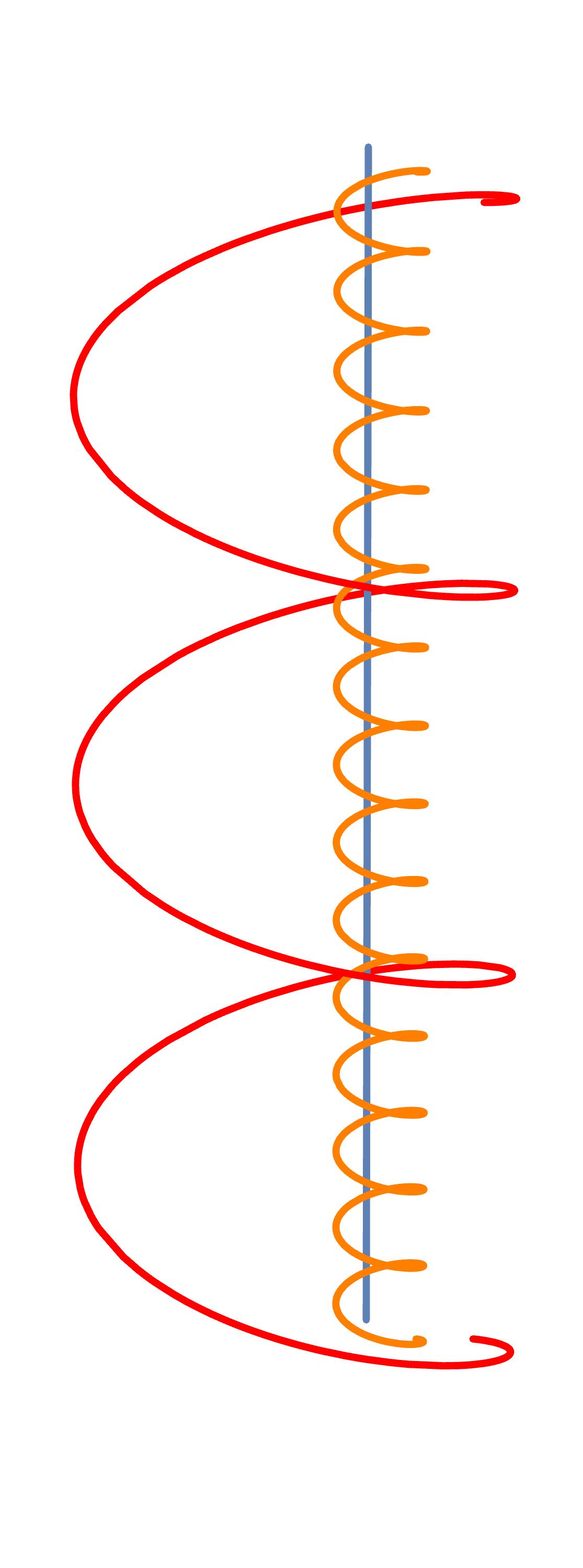}
	\caption{
		Motion of a charged particles in a uniform magnetic field. We plotted the trajectories $x_{\omega}$ for two different values of $\omega$. As $\omega$ increases the radius of the helix decreases and the oscillation becomes faster.} 
	\label{Spirale}
\end{figure}
\endgroup

We will now state our main results and discuss after each statement its connection to known results in the physics literature. Here we say that $B\in C^k$ if it is $k$-times continuously differentiable and the supremums norm of all derivatives up to order $k$ on $\mathbb{R}^3$ is finite. We will write $B\in C^k_{\operatorname{loc}}$ if $B$ is $k$ times continuously differentiable in $\mathbb{R}^3$ (but no assumption on the finiteness of the supremums norm over the full space is made). Lastly, we will need to make assumptions on the decay rate of the magnetic fields in order to be able to derive a qualitative time dependence of the convergence rate.
\begin{defn}[Minimal growth at infinity]
	\label{ID1}
	Let $B\in C^0(\mathbb{R}^3,\mathbb{R}^3)$ be a continuous vector field and $\gamma\in [0,\infty)$. We say that $B$ has a \textit{minimal growth of order $\gamma$ at infinity} if there exist $R,C>0$ such that for all $|y|\geq R$ we have the estimate
	\begin{gather}
		\nonumber
		\frac{C}{|y|^\gamma}\leq |B(y)|.
	\end{gather}
\end{defn} 
\begin{thm}[First Main Theorem: Zero order approximation]
	\label{IT1}
	Let $B\in C^2(\mathbb{R}^3,\mathbb{R}^3)$ be a no-where vanishing, div-free vector field defined on $\mathbb{R}^3$ and fix $x_0,v_0\in \mathbb{R}^3$. Let $x_{\omega}$ be the (unique) solution to (\ref{IE1}) for given $\omega >0$. Then there exists some $x\in C^{3}_{\operatorname{loc}}(\mathbb{R},\mathbb{R}^3)$ such that
	\begin{gather}
		\nonumber
		x_{\omega}\rightarrow x\text{ in }C^{0,\alpha}_{\operatorname{loc}}(\mathbb{R})\text{ for all }0\leq \alpha<1 \text{ as }\omega\rightarrow\infty.
	\end{gather}
	The limit trajectory $x(t)$ satisfies $\dot{x}(t)=h(t)b(x(t))$, where $b(y):=\frac{B(y)}{|B(y)|}$ and $h(t)$ is a suitable $C^2_{\operatorname{loc}}$-function. Further, if $B$ has a minimal growth of degree $\gamma\geq 0$ at infinity, then we have the following qualitative convergence rate
	\begin{gather}
		\nonumber
		\|x_{\omega}-x\|_{C^0[0,t]}\leq C\frac{\exp(C\exp(Ct^{\gamma+2}))}{\omega}\text{ for all }t\geq 0
	\end{gather}
	for a suitable constant $C>0$ which depends on the magnetic field $B$, but is independent of $t$ and $\omega$. Lastly, if in addition $B\in C^k_{\operatorname{loc}}$, then $x\in C^{k+1}_{\operatorname{loc}}$ and if $B$ is analytic, then so is $x$.
\end{thm}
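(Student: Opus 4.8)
\emph{Proof sketch.} The idea is to produce the limit curve by compactness, reduce the velocity to its component along $B$, recognise the limiting dynamics as an autonomous ODE, and bootstrap regularity from there; the quantitative rate is then a Grönwall estimate with explicit constants.

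\emph{Compactness and velocity splitting.} The only a priori bound is energy conservation, $|\dot x_\omega(t)|\equiv|v_0|$; hence on a fixed interval $[0,T]$ the curves $x_\omega$ are $|v_0|$-Lipschitz and stay in the compact ball $K_T:=\overline{B(x_0,|v_0|T)}$, on which $|B|$ is bounded below by a positive constant and $b:=B/|B|$ lies in $C^2$ with $\omega$-independent bounds. By Arzel\`{a}--Ascoli, a subsequence of any $\omega_n\to\infty$ satisfies $x_{\omega_n}\to x$ locally uniformly with $x$ Lipschitz; interpolating the uniform $C^{0,1}$-bound with $C^0$-convergence upgrades this to convergence in $C^{0,\alpha}_{\operatorname{loc}}$ for all $\alpha<1$. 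Write $\dot x_\omega=h_\omega\,b(x_\omega)+w_\omega$ with $h_\omega:=\dot x_\omega\cdot b(x_\omega)$ and $w_\omega\perp b(x_\omega)$, so $|h_\omega|\le|v_0|$ and $|w_\omega|^2=|v_0|^2-h_\omega^2$. Using $\ddot x_\omega=\omega\dot x_\omega\times B(x_\omega)$ and the cross-product structure one finds $w_\omega=\phi_\omega\times\ddot x_\omega$ with $\phi_\omega:=B(x_\omega)/(\omega|B(x_\omega)|^2)$, and since $\|\phi_\omega\|_{C^0[0,T]}$ and $\|\dot\phi_\omega\|_{C^0[0,T]}$ are $O(1/\omega)$, one integration by parts gives
\[
\sup_{t\in[0,T]}\Big|\int_0^t w_\omega(s)\,ds\Big|=O(1/\omega),
\]
so the gyration averages out to first order. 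Since $\ddot x_\omega\perp b(x_\omega)$ one computes $\dot h_\omega=\dot x_\omega^{\top}\,Db(x_\omega)\,\dot x_\omega$, which is bounded uniformly in $\omega$; thus $h_\omega$ is equi-Lipschitz, a further subsequence satisfies $h_\omega\to h$ locally uniformly, and $|w_\omega|^2\to|v_0|^2-h^2$ uniformly.

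\emph{Identifying the limiting ODE.} Differentiating $|b|^2\equiv1$ yields $b^{\top}Db\equiv0$, which kills the purely parallel contribution and one of the mixed terms, leaving
\[
\dot h_\omega=h_\omega\,w_\omega\cdot\big[(b\!\cdot\!\nabla)b\big](x_\omega)+w_\omega^{\top}Db(x_\omega)\,w_\omega .
\]
The term linear in $w_\omega$ integrates to $O(1/\omega)$ by the same integration by parts (its prefactor is uniformly Lipschitz). For the quadratic term I pass to the matrix $W_\omega:=w_\omega w_\omega^{\top}$: from $\dot w_\omega=\omega|B(x_\omega)|\,w_\omega\times b(x_\omega)+O(1)$ one obtains $\dot W_\omega=\omega|B(x_\omega)|\,\mathcal L_{b(x_\omega)}(W_\omega)+O(1)$, where $\mathcal L_b(M):=S_bM-MS_b$ with $S_b u:=u\times b$, acting on symmetric matrices annihilating $b$. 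On that space $\ker\mathcal L_b=\mathbb R\,(I-bb^{\top})$ while $\mathcal L_b$ is invertible with an absolutely bounded inverse on the complement. Writing $W_\omega=\tfrac12|w_\omega|^2\big(I-b(x_\omega)b(x_\omega)^{\top}\big)+\widehat W_\omega$ — the first summand being precisely the kernel part, since $\operatorname{tr}W_\omega=|w_\omega|^2$ and $W_\omega b(x_\omega)=0$ — one solves $\widehat W_\omega=\tfrac1{\omega|B(x_\omega)|}\mathcal L_{b(x_\omega)}^{-1}\dot{\widehat W}_\omega+O(1/\omega)$ and integrates by parts once more. This is the step where $B\in C^2$ is genuinely needed: the leftover integrand contains $\tfrac{d}{ds}Db(x_\omega)=D^2b(x_\omega)[\dot x_\omega]$, which has to be bounded. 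One concludes
\[
\int_0^t w_\omega^{\top}Db(x_\omega)\,w_\omega\,ds\;\longrightarrow\;\int_0^t\tfrac12\big(|v_0|^2-h^2\big)\operatorname{div}b(x)\,ds ,
\]
using $\operatorname{tr}\!\big(Db\,(I-bb^{\top})\big)=\operatorname{div}b-b^{\top}Db\,b=\operatorname{div}b$. Together with the identity for $\int_0^t\dot x_\omega$, every subsequential limit $(x,h)$ solves the autonomous system $\dot x=h\,b(x)$, $\dot h=\tfrac12(|v_0|^2-h^2)\operatorname{div}b(x)$ with initial data $\big(x_0,\,v_0\cdot b(x_0)\big)$. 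I expect this averaging of the quadratic gyration term — i.e.\ showing that $\widehat W_\omega$ is, up to $O(1/\omega)$, a total $s$-derivative of an $O(1/\omega)$ quantity — to be the main obstacle.

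\emph{Uniqueness, regularity, rate.} When $B\in C^2_{\operatorname{loc}}$ the system above has a locally-$C^1$ right-hand side, hence a unique solution, which is global because $|h|\le|v_0|$ is preserved; thus the subsequence extractions were superfluous, $x_\omega\to x$ for the full family, and $\dot x=h\,b(x)$ as claimed. Regularity follows by bootstrap: from $\dot h=\tfrac12(|v_0|^2-h^2)\operatorname{div}b(x)$ one gets $h\in C^2_{\operatorname{loc}}$, hence $\dot x=h\,b(x)\in C^2_{\operatorname{loc}}$ and $x\in C^3_{\operatorname{loc}}$; if $B\in C^k_{\operatorname{loc}}$ the same iteration (the right-hand side loses exactly one derivative through $\operatorname{div}b$) yields $x\in C^{k+1}_{\operatorname{loc}}$, and for analytic $B$ the analytic-dependence theorem for ODEs makes $x$ analytic. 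Finally, for the rate I run a Grönwall estimate on $\|x_\omega-x\|_{C^0[0,t]}+\|h_\omega-h\|_{C^0[0,t]}$ built from the two integral identities above; the minimal-growth hypothesis enters only as a polynomial lower bound $\min_{K_t}|B|\gtrsim t^{-\gamma}$, which converts each factor $1/\min_{K_t}|B|$ into a power of $t^{\gamma}$, and inserting these (far from optimal) bounds into Grönwall produces an estimate of the stated nested-exponential form $C\exp\!\big(C\exp(Ct^{\gamma+2})\big)/\omega$. Keeping every $O(1/\omega)$ error constant explicit in $t$ through $\min_{K_t}|B|$ is the other technically delicate point.
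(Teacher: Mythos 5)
Your proof is correct and shares the overall architecture of the paper's argument: energy conservation gives the uniform Lipschitz bound, Arzel\`a--Ascoli plus interpolation gives subsequential $C^{0,\alpha}_{\operatorname{loc}}$ convergence, the velocity is split into a parallel part $h_\omega b(x_\omega)$ and a gyrating part which is $\tfrac1\omega$ times a total derivative and therefore averages out after one integration by parts, and the limit is pinned down as the unique solution of the autonomous system $\dot x=hb(x)$, $\dot h=\tfrac12(|v_0|^2-h^2)\operatorname{div}(b)(x)$. The genuinely different ingredient is your treatment of the quadratic gyration term $\int_0^t w_\omega^{\top}Db(x_\omega)w_\omega\,ds$. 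The paper integrates by parts once to show $\int_0^t v_\omega^{\top}Db\,v_\omega\,ds=\int_0^t(v_\omega\times b)^{\top}Db\,(v_\omega\times b)\,ds+\mathcal O(1/\omega)$ and then invokes the pointwise identity of \Cref{S2L2}, which expresses the sum of these two quadratic forms as $\operatorname{div}(b)\bigl(|v|^2-(v\cdot b)^2\bigr)-(v\cdot b)(v\times b)\cdot\operatorname{curl}(b)$, the curl term being averaged away by one more integration by parts. Your route --- decomposing $w_\omega w_\omega^{\top}$ into the kernel $\mathbb R\,(I-bb^{\top})$ and the traceless range of the commutator $\mathcal L_b$, on which $\mathcal L_b$ has a uniformly bounded inverse, and integrating by parts on the range --- is the systematic averaging of a quadratic observable under fast rotation; it is more conceptual and would extend to higher-order corrections, at the price of the operator formalism, whereas the paper's identity is elementary vector calculus but ad hoc. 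Both use $\operatorname{tr}\bigl(Db\,(I-bb^{\top})\bigr)=\operatorname{div}b$ and both consume $B\in C^2$ exactly where you say, namely when the derivative falls on $Db(x_\omega(s))$ after integration by parts. Your rate argument is also slightly different and in fact simpler: a single Gr\"onwall applied to $\|x_\omega-x\|_{C^0}+\|h_\omega-h\|_{C^0}$, with the polynomial coefficients (coming from $\min_{K_t}|B|\gtrsim(1+t)^{-\gamma}$) frozen at the endpoint, yields a bound of the form $C\exp(Ct^{2\gamma+2})/\omega$, a single exponential which implies (and improves on) the stated double exponential; the paper instead applies Gr\"onwall to each of the two coupled inequalities separately and recouples them through a continuity argument on the set $K_f$, which is what produces the nested exponential. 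Once the $\mathcal O(1/\omega)$ error constants are tracked polynomially in $t$ as you indicate, your sketch closes completely.
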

\begin{rem}
	\label{IR}
	\begin{enumerate}
		\item Here we say that $x_{\omega}\rightarrow x$ in $C^{0,\alpha}_{\operatorname{loc}}(\mathbb{R})$ if $x_{\omega}\rightarrow x$ in $C^{0,\alpha}([0,T])$ for any fixed $0<T<\infty$.
		\item Our result regarding the type of convergence is optimal in the sense that if $x_{\omega}\rightarrow x$ in $W^{1,1}_{\operatorname{loc}}(\mathbb{R},\mathbb{R}^3)$, then in particular $v_{\omega}=\dot{x}_{\omega}\rightarrow \dot{x}$ in $L^1_{\operatorname{loc}}(\mathbb{R})$ and consequently (after possibly passing to a subsequence) $v_{\omega}(t)\rightarrow \dot{x}(t)$ for a.e. $t$. It follows however from (\ref{IE1}) that $|v_{\omega}(t)|=|v_0|$ for all $t$ and $\omega$ so that we infer $|\dot{x}(t)|=|v_0|$ for almost all $t$ and by continuity of $\dot{x}(t)$ we find $|\dot{x}(t)|=|v_0|$ for all $t\geq 0$ and in particular $|\dot{x}(0)|=|v_0|$. But we will see in the proof of \Cref{IT1} that we have $|\dot{x}(0)|=|v_0\cdot b(x_0)|$ and so as soon as $v_0$ is not parallel to $b(x_0)$ we cannot have a $W^{1,1}_{\operatorname{loc}}$ convergence. As a consequence, $x_{\omega}(t)$ does not converge to $x(t)$ in $C^{0,1}_{\operatorname{loc}}(\mathbb{R})$ either.
		\item We further note that in contrast to the situation of a uniform magnetic field $B=(0,0,1)$ one cannot improve the $C^{0}_{\operatorname{loc}}(\mathbb{R})$ convergence to a $C^0(\mathbb{R})$ convergence in general. Indeed, one may consider a vector field $B(x,y,z):=(-y,x,f(x^2+y^2))$ where $f$ is a smooth bump function which is non-zero at $0$ and compactly supported within $(-1,1)$. This defines a no-where vanishing div-free field and one can show that for initial conditions $x_0,v_0$ with $x^2_{0,1}+x^2_{0,2}>1$ (where $x_0=(x_{0,1},x_{0,2},x_{0,3})$), $x_{\omega}(t)$ will approach a limit trajectory $x(t)$ which is a reparametrisation of a field line of $B$. More precisely, the function $h$ from \Cref{IT1} will be constant and given by $h(t)=v_0\cdot b(x_0)$ in this specific situation. One can further show that $|x^3_{\omega}(t)-x_{0,3}|\in \Theta\left(\frac{\dot{\phi}^2_0t}{\omega}\right)$, i.e. both terms can be bounded by each other modulo some time and $\omega$-independent constant, where $x_{\omega}(t)=(x^1_{\omega}(t),x^2_{\omega}(t),x^3_{\omega}(t))$ and where $\dot{\phi}_0$ denotes the initial angular velocity. Since field lines of $B$ in this situation do not change the $z$-component of the initial condition, we conclude that as soon as $\dot{\phi}_0\neq 0$ we will not have a $C^0(\mathbb{R})$ convergence.
		\item The function $h(t)$ in \Cref{IT1} turns out to be constant for any initial condition $x_0,v_0\in \mathbb{R}^3$ if and only if $\operatorname{div}(b)(y)=0$ for all $y\in \mathbb{R}^3$.
		\item In the context of plasma fusion confinement it is reasonable to assume that the (volume) current density which generates the magnetic field within the plasma confinement device is compactly supported within some compact region of $\mathbb{R}^3$ (the massive $3$-d coils). The induced magnetic field is given by the Biot-Savart law and it is well-known that in this situation the magnetic field $B$ and its derivatives decay to zero at infinity so that the globally $C^2$- and div-free assumptions on the magnetic field in the first part of \Cref{IT1} are natural.
		\item The additional non-vanishing and minimal growth at infinity assumptions are of technical nature. The necessity for these assumptions stems from the fact that the error term will contain terms of the form $\frac{1}{|B(x_{\omega}(t))|}$. For general magnetic fields (which may support non-compact field lines), due to the convergence $x_{\omega}(t)\rightarrow x(t)$ in $C^0_{\operatorname{loc}}$, $x_{\omega}$ will in general not be contained in any finite subregion of $3$-space and hence, in order to estimate terms like $\frac{1}{|B(x_{\omega}(t))|}$ it will be essential to make assumptions on the decay of $|B|$ at infinity.
		\item If the magnetic field $B$ admits zeros, then a generalisation of \Cref{IT1} is possible as long as we restrict attention to initial conditions $x_0,v_0$ for which the trajectories $x_{\omega}(t)$ avoid the zero set of $B$ for all times $t\geq 0$ and all large enough $\omega \gg 1$.
		\item One can further consider the situation in which $B$ is not defined on all of $\mathbb{R}^3$ but only on a subset $\Omega\subset\mathbb{R}^3$. However, in this case it is no longer guaranteed that the trajectories $x_{\omega}$ are defined for all times and consequently one can only consider times $T>0$ for which the trajectories $x_{\omega}$ are all defined up to the same time $T$ for all large enough $\omega$.
		\item In the context of compactly supported current distributions, it has been shown in \cite[Section III A]{CDG01}, that the magnetic field decays at least as fast as $\frac{1}{|y|^3}$ at infinity, so that the best possible estimate we can hope for by means of \Cref{IT1} is obtained by picking $\gamma=3$ for the applications which we have in mind.
		\item We do not claim any optimality regarding the time dependence of the convergence rate. The reason we arrive at a "double-exponential" estimate arises informally speaking from the fact that we have to deal with a coupled system of 2 first order ODEs. In order to gain qualitative control we will apply Gronwall's inequality twice which leads us to the appearance of two exponential functions. In fact, our method allows us to infer a slightly better qualitative bound, which is however a bit more complicating to write down and is still far from being "single-exponential" so that we decided to use the double exponential form instead.
	\end{enumerate}
\end{rem}
As we will see in the proof of \Cref{IT1} we also obtain the following property of the limit trajectory $x$.
\begin{cor}[Magnetic moment preservation]
	\label{IC2}
	Let $B\in C^2(\mathbb{R}^3,\mathbb{R}^3)$ be a no-where vanishing div-free vector field and let $x(t)$ and $h(t)$ be as in the statement of \Cref{IT1} for fixed $x_0,v_0\in \mathbb{R}^3$. Define $\mu(t):=\frac{|v_0|^2-h^2(t)}{2|B(x(t))|}$, then
	\begin{gather}
		\nonumber
		\mu(t)=\mu(0)\equiv \mu_0 \text{ for all } t\geq 0.
	\end{gather}
\end{cor}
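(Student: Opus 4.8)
The plan is to prove $\mu(t)=\mu_0$ by showing directly that $\dot\mu\equiv 0$ on $[0,\infty)$, differentiating the defining expression for $\mu(t)$ and feeding in two ingredients: the div-free condition on $B$, and the first-order differential equation satisfied by the parallel speed $h$ that emerges in the course of proving \Cref{IT1}. I would first note that, since $B\in C^2$ is nowhere vanishing and $x\in C^3_{\operatorname{loc}}$, the function $\beta(t):=|B(x(t))|$ is positive and of class $C^2_{\operatorname{loc}}$, so that $\mu\in C^1_{\operatorname{loc}}$; it therefore suffices to check $\dot\mu(t)=0$ for every $t$ and then invoke continuity.

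Next I would record how $\beta$ evolves. By the chain rule together with $\dot x(t)=h(t)b(x(t))$,
\begin{gather}
\nonumber
\dot\beta(t)=(\nabla|B|)(x(t))\cdot\dot x(t)=h(t)\,(b\cdot\nabla|B|)(x(t)).
\end{gather}
Expanding the vanishing divergence of $B=|B|\,b$ pointwise, $0=\operatorname{div}B=\nabla|B|\cdot b+|B|\operatorname{div}(b)$, so that $b\cdot\nabla|B|=-|B|\operatorname{div}(b)$, and hence
\begin{gather}
\nonumber
\dot\beta(t)=-h(t)\,\beta(t)\,\operatorname{div}(b)(x(t)).
\end{gather}
For the second ingredient I would invoke the evolution equation for $h$ obtained when passing to the limit $\omega\to\infty$ in the proof of \Cref{IT1}, namely $\dot h(t)=\tfrac12(|v_0|^2-h^2(t))\operatorname{div}(b)(x(t))$; this is exactly the familiar mirror-force term, and it is consistent with the fourth item of \Cref{IR}, since its right-hand side vanishes for every choice of $x_0,v_0$ precisely when $\operatorname{div}(b)\equiv 0$.

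With these two identities in hand the rest is a one-line computation: the quotient rule applied to $\mu=\tfrac{|v_0|^2-h^2}{2\beta}$ gives
\begin{gather}
\nonumber
\dot\mu(t)=-\frac{h(t)\dot h(t)}{\beta(t)}-\frac{(|v_0|^2-h^2(t))\,\dot\beta(t)}{2\beta(t)^2},
\end{gather}
and substituting the formulas for $\dot h$ and $\dot\beta$ makes the two summands equal to $\tfrac{h(t)(|v_0|^2-h^2(t))\operatorname{div}(b)(x(t))}{2\beta(t)}$ with opposite signs, so they cancel and $\dot\mu\equiv 0$; thus $\mu(t)=\mu(0)=\mu_0$ for all $t\geq 0$. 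The only genuine obstacle is pinning down the differential equation for $h$, but this is produced for free while proving \Cref{IT1} (the parallel projection of (\ref{IE1}), once the fast gyration has been averaged out, leaves precisely that mirror-force term), so the corollary really reduces to the algebra above. Should one prefer not to rely on the ODE for $h$, an alternative is to realise $\mu$ as the $\omega\to\infty$ limit of the adiabatic invariants $\mu_\omega(t):=\tfrac12|v_\omega(t)-(v_\omega(t)\cdot b(x_\omega(t)))\,b(x_\omega(t))|^2/|B(x_\omega(t))|$ of (\ref{IE1}), using $v_\omega\cdot b(x_\omega)\to h$ and $|v_\omega|\equiv|v_0|$; the direct differentiation is, however, cleaner.
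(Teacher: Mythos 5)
Your argument is correct and is essentially the paper's own: the proof there likewise differentiates $\mu$, feeds in the limiting ODE $2\dot h=(|v_0|^2-h^2)\operatorname{div}(b)(x)$ from (\ref{S2E7}) together with $\dot x=h\,b(x)$, and uses $\operatorname{div}(B)=0$ (there written as $\operatorname{div}(b)=B\cdot\nabla|B|^{-1}$, equivalent to your $b\cdot\nabla|B|=-|B|\operatorname{div}(b)$) to see the two terms cancel. The only cosmetic difference is that you work with $\beta=|B(x(t))|$ where the paper works with $|B|^{-1}(x(t))$.
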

In the context of physics literature the zero order approximation $x$ of $x_{\omega}$ is known as the zeroth order order guiding center motion, see for instance \cite[Introduction]{BMKN23}, which is known to follow magnetic field lines. We note that the notion of trapped particles in plasma physics corresponds to the situation where the function $h$ in \Cref{IC2} changes signs and consequently the zero order approximation is "reflected".

As for the physical interpretation of \Cref{IC2} we note that the speed $|v_\omega|$ is preserved in time for all solutions of (\ref{IE1}), i.e. $|v_{\omega}(t)|=|v_0|$ for all times $t$ and any fixed $\omega$. So if we let $v^\parallel_{\omega}(t):=(v_{\omega}(t)\cdot b(x_{\omega}(t)))b(x_{\omega}(t))$ and $v^\perp_{\omega}(t):=v_{\omega}(t)-v^\parallel_{\omega}(t)$, we can define the magnetic moment of the particle by $\mu_{\omega}(t):=\frac{|v^\perp_{\omega}(t)|^2}{2|B(x_{\omega}(t))|}$, see \cite[Chapter 4]{IGPW24}. Since $|v^\perp_{\omega}|^2=|v_{\omega}|^2-(v^\parallel_{\omega})^2=|v_0|^2-h^2_{\omega}$, where we set $h_{\omega}(t):=v_{\omega}(t)\cdot b(x_{\omega}(t))$, it will follow from the proof of \Cref{IT1} not only that $x_{\omega}\rightarrow x$ in $C^{0,\alpha}_{\operatorname{loc}}(\mathbb{R})$, but also that $h_{\omega}(t)\rightarrow h(t)$ in $C^{0,\alpha}_{\operatorname{loc}}(\mathbb{R})$ so that
\begin{gather}
	\label{IE2}
	\lim_{\omega\rightarrow\infty}\mu_{\omega}(t)=\mu(t)\text{ in }C^{0,\alpha}_{\operatorname{loc}}(\mathbb{R})\text{ for all }0\leq\alpha<1.
\end{gather}
\Cref{IE2} can therefore be interpreted in the sense that the magnetic moment $\mu_{\omega}(t)$ of a charged particle is an almost preserved quantity in the context of high gyro frequencies and thus justifies rigorously the fact that the magnetic moment is referred to as an adiabatic invariant in physics, \cite[Chapter 4.2]{IGPW24}.

Our second main result concerns the change of pressure along a charged particle trajectory. To be more precise we recall here that for a given bounded $C^1$-domain $\Omega\subset\mathbb{R}^3$ we say that a $C^1$-vector field $B$ on $\Omega$ is a plasma equilibrium if $B$ satisfies
\begin{gather}
	\label{IE3}
	\operatorname{div}(B)=0\text{, }B\times \operatorname{curl}(B)=\nabla p\text{, }B, \operatorname{curl}(B) \parallel \partial\Omega
\end{gather}
where $p$ is the "pressure"-function of the system and $B$ is the magnetic field. We point out that it follows immediately from (\ref{IE3}) that $B$ as well as $\operatorname{curl}(B)$ are tangent to the level sets of $p$. Usually the condition that $\operatorname{curl}(B)$ is parallel to the boundary of $\Omega$ is not included in the definition of a plasma equilibrium. The reason we impose this condition is to guarantee the absence of invariant cylindrical surfaces. To be more precise, according to Arnold's structure theorem \cite[Section II Theorem 1.2]{AK21}, assuming $p$ is not constant and all quantities being analytic, there exists an at most $2$-dimensional subset $S\subset \Omega$ such that $\Omega\setminus S$ consists of finitely many connected components, each of which is foliated into surfaces which are invariant under the flow of $B$ and $\operatorname{curl}(B)$. Within any such component, either all leafs of the foliation are tori and the fields $B$ and $\operatorname{curl}(B)$ linearise on any such surface, or all the leafs are cylindrical surfaces, in which case the field lines of $B$ are all closed on any such cylinder. Further, the invariant surfaces correspond to regular level sets of the pressure $p$. The additional assumption $\operatorname{curl}(B)\parallel \partial \Omega$ rules out the existence of components consisting of cylindrical leafs.

Now, to come to our next main result, we want to understand how far a particle $x_{\omega}$ may drift away from an initial pressure surface $p_0=p(x_0)$ under the influence of a plasma equilibrium field, i.e. we would like to find an expansion for $p(x_{\omega}(t))$ in terms of $\frac{1}{\omega}$. We summarise our main findings in the following theorem.
\begin{thm}[Second Main Theorem: Displacement formula]
	\label{IT3}
	Let $B\in C^3$ be a no-where vanishing, div-free field on $\mathbb{R}^3$ and suppose that there is a bounded domain $\Omega\subset\mathbb{R}^3$, $\partial\Omega\in C^3$ and a function $p\in C^{3}(\overline{\Omega})$ such that $B,p$ satisfy (\ref{IE3}). Then for any given $x_0\in \Omega,v_0\in \mathbb{R}^3$ we have the expansion
	\begin{gather}
		\nonumber
		p(x_{\omega}(t))=p(x_0)+\frac{(b(x(t))\times v_{\omega}(t))\cdot \nabla p (x(t))}{|B(x(t))|\omega}-\frac{(b(x_0)\times v_0)\cdot \nabla p(x_0)}{\omega|B(x_0)|}
		\\
		\nonumber
		+\frac{1}{\omega}\int_0^t\left(\frac{|v_0|^2}{|B(x(s))|}-\mu_0\right)(B(x(s))\times \nabla p(x(s)))\cdot \nabla (|B|^{-1})(x(s))ds+\frac{\mathcal{O}(1)}{\omega^2}\exp(C\exp(Ct^2))
	\end{gather}
	for a suitable $C>0$ which is independent of $t$ and $\omega$. Here $x_{\omega}(t)$ is the solution to (\ref{IE1}) with initial conditions $x_0,v_0$; $x(t)$ is the limit trajectory from \Cref{IT1}, $\mu_0$ is the magnetic moment as defined in \Cref{IC2}, the term $\mathcal{O}(1)$ is uniformly bounded in $\omega$ and $t$, i.e. there is some $c>0$ depending on $B$ and $p$ but independent of $t$ and $\omega$ such that $|\mathcal{O}(1)|\leq c$ and where this expansion is valid for all times $t$ and $\omega$ for which $x_{\omega}(s)\in \Omega$ for all $0\leq s\leq t$.
\end{thm}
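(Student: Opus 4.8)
The plan is to differentiate $p$ along $x_\omega$, integrate by parts once, and then average out the remaining quadratic‑in‑velocity term using the structure of the Lorentz force together with the equilibrium identities~(\ref{IE3}). Throughout, $x(s)\in\overline\Omega$ for $s\in[0,t]$ by uniform convergence, and $\overline\Omega$ is compact with $\inf_{\overline\Omega}|B|>0$.

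First, set $b:=B/|B|$ and $W:=\tfrac{\nabla p\times b}{|B|}=-\tfrac{B\times\nabla p}{|B|^2}$. From $B\times\operatorname{curl}B=\nabla p$ one has $\nabla p\cdot B=0$, so $\nabla p\cdot v_\omega=\nabla p\cdot v_\omega^{\perp}$ with $v_\omega^{\perp}:=v_\omega-(v_\omega\cdot b(x_\omega))b(x_\omega)$. Crossing the equation of motion $\dot v_\omega=\omega\,v_\omega\times B(x_\omega)$ with $b(x_\omega)$ and using $a\times(u\times w)=u(a\cdot w)-w(a\cdot u)$ gives $v_\omega^{\perp}=\tfrac{b(x_\omega)\times\dot v_\omega}{\omega|B(x_\omega)|}$; a cyclic permutation of the triple product then yields
\[
\tfrac{d}{dt}p(x_\omega(t))=\nabla p(x_\omega)\cdot v_\omega=\tfrac1\omega\,\dot v_\omega\cdot W(x_\omega).
\]
Integrating over $[0,t]$ and integrating by parts (with $\tfrac{d}{ds}W(x_\omega(s))=(v_\omega(s)\cdot\nabla)W(x_\omega(s))$) produces the boundary term $\tfrac1\omega[v_\omega\cdot W(x_\omega)]_0^t$ and the bulk term $-\tfrac1\omega\int_0^t v_\omega\cdot\big[(v_\omega\cdot\nabla)W\big](x_\omega)\,ds$. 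Because the trajectories remain in $\overline\Omega$, all $|B(x_\omega(\cdot))|^{-1}$‑type error terms in the proof of \Cref{IT1} are uniformly bounded, so that estimate applies on $[0,t]$ with $\gamma=0$: $\|x_\omega-x\|_{C^0[0,t]}\le\tfrac C\omega\exp(C\exp(Ct^2))$, and similarly $\|h_\omega-h\|_{C^0[0,t]}\le\tfrac C\omega\exp(C\exp(Ct^2))$ for $h_\omega:=v_\omega\cdot b(x_\omega)$. Replacing $x_\omega(t)$ by $x(t)$ inside $W$ in the boundary term at $t$ (using $|v_\omega|\equiv|v_0|$ and the Lipschitz bound for $W$ on $\overline\Omega$) costs $\tfrac{\mathcal O(1)}{\omega^2}\exp(C\exp(Ct^2))$, and unfolding $W$ and permuting the triple product once more turns the two boundary terms into exactly the second and third terms of the asserted expansion.

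The heart of the proof is the bulk term: $v_\omega$ itself does \emph{not} converge (only $v_\omega^{\parallel}=h_\omega b(x_\omega)$ does), so one must genuinely average over the gyration. Write $v_\omega\cdot[(v_\omega\cdot\nabla)W](x_\omega)=\langle\Sigma(x_\omega),v_\omega\otimes v_\omega\rangle$, where $\Sigma:=\tfrac12(DW+DW^{\top})$ ($DW$ the Jacobian) and $\langle\cdot,\cdot\rangle$ is the Frobenius product; by the equation of motion $\tfrac{d}{ds}(v_\omega\otimes v_\omega)=\omega\,\mathcal L_{x_\omega(s)}(v_\omega\otimes v_\omega)$, where $\mathcal L_y(S):=S\widehat{B(y)}-\widehat{B(y)}S$ ($\widehat a$ the skew matrix with $\widehat a\,u=a\times u$) is skew‑adjoint on symmetric matrices, with two‑dimensional kernel $\operatorname{span}\{b(y)\otimes b(y),\,\mathrm{Id}-b(y)\otimes b(y)\}$ and $\|\mathcal L_y^{-1}\|\le|B(y)|^{-1}$ on the orthogonal complement. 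Splitting $v_\omega\otimes v_\omega=K_\omega+F_\omega$ into its orthogonal projection onto $\ker\mathcal L_{x_\omega}$ and the remainder, one gets $K_\omega=h_\omega^2\,b\otimes b+\tfrac{|v_0|^2-h_\omega^2}{2}(\mathrm{Id}-b\otimes b)$ (using $|v_\omega|^2\equiv|v_0|^2$), which is slowly varying, while $\mathcal L_{x_\omega}F_\omega=\tfrac1\omega\tfrac{d}{ds}(v_\omega\otimes v_\omega)$, hence $\langle\Sigma(x_\omega),F_\omega\rangle=\tfrac1\omega\langle\mathcal G(x_\omega(s)),\tfrac{d}{ds}(v_\omega\otimes v_\omega)\rangle$ for the bounded $C^1$, $s$‑independent field $\mathcal G(y):=(\mathcal L_y^{-1})^{*}\Sigma(y)$ on $\overline\Omega$. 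A second integration by parts, using that $\tfrac{d}{ds}\mathcal G(x_\omega(s))=(v_\omega\cdot\nabla)\mathcal G(x_\omega(s))$ is bounded — whereas $\dot v_\omega$ is $\mathcal O(\omega)$, which is precisely why $K_\omega$ must be peeled off first — shows $\int_0^t\langle\Sigma(x_\omega),F_\omega\rangle\,ds=\mathcal O((1+t)/\omega)$, a $\tfrac{\mathcal O(1)}{\omega^2}$ contribution. Since the remaining integrand is Lipschitz in $(x,h)$ on $\overline\Omega\times[-|v_0|,|v_0|]$ ($B,p\in C^3$), one may replace $(x_\omega,h_\omega)$ by $(x,h)$, so that modulo $\tfrac{\mathcal O(1)}{\omega^2}\exp(C\exp(Ct^2))$ errors
\[
-\tfrac1\omega\!\int_0^t\! v_\omega\cdot[(v_\omega\cdot\nabla)W](x_\omega)\,ds=-\tfrac1\omega\!\int_0^t\!\Big[h^2\,b\cdot\big((b\cdot\nabla)W\big)+\tfrac{|v_0|^2-h^2}{2}\operatorname{tr}_{\perp}DW\Big](x(s))\,ds,
\]
with $\operatorname{tr}_{\perp}DW:=\operatorname{div}W-b\cdot\big((b\cdot\nabla)W\big)$.

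Finally I would identify the surviving integrand via~(\ref{IE3}). Since $W\cdot b\equiv0$, one has $b\cdot\big((b\cdot\nabla)W\big)=-W\cdot(b\cdot\nabla)b$; combining $(B\cdot\nabla)B=\tfrac12\nabla|B|^2-B\times\operatorname{curl}B=\tfrac12\nabla|B|^2-\nabla p$ with $W\perp b$ and $W\perp\nabla p$ gives $b\cdot\big((b\cdot\nabla)W\big)=-W\cdot\nabla\ln|B|=-\tfrac1{|B|}(B\times\nabla p)\cdot\nabla(|B|^{-1})=:-Q$; and from $\operatorname{div}(B\times\nabla p)=\operatorname{curl}B\cdot\nabla p-B\cdot\operatorname{curl}(\nabla p)=\operatorname{curl}B\cdot(B\times\operatorname{curl}B)=0$ one gets $\operatorname{div}W=-2Q$, hence $\operatorname{tr}_{\perp}DW=-Q$ as well. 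Therefore the bracket equals $-\big(h^2+\tfrac{|v_0|^2-h^2}{2}\big)Q=-\tfrac{|v_0|^2+h^2}{2}Q$, and using $\mu_0=\tfrac{|v_0|^2-h^2(s)}{2|B(x(s))|}$ from \Cref{IC2} this is $-\big(\tfrac{|v_0|^2}{|B(x(s))|}-\mu_0\big)\big(B(x(s))\times\nabla p(x(s))\big)\cdot\nabla(|B|^{-1})(x(s))$; multiplying by $-\tfrac1\omega$ gives precisely the claimed integral term. Collecting $p(x_0)$, the two boundary terms, this drift integral, and the errors — each $\le\tfrac{\mathcal O(1)}{\omega^2}\exp(C\exp(Ct^2))$, since $t\exp(C\exp(Ct^2))$ is absorbed by enlarging $C$ — yields the displacement formula. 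The only real difficulty is the averaging step: the quadratic dependence on the non‑convergent $v_\omega$ forces the kernel‑projection plus second integration‑by‑parts argument, and matching the averaged integrand to the stated form rests on the (a priori nonobvious) collapse $b\cdot\big((b\cdot\nabla)W\big)=\operatorname{tr}_{\perp}DW=-Q$ imposed by equilibrium.
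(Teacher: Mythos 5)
Your argument is correct and reproduces the paper's formula, but the central averaging step is done by a genuinely different mechanism. The skeleton is the same as ours: use $\nabla p\cdot B=0$ and $v_{\omega}^{\perp}=\frac{b\times \dot v_{\omega}}{\omega|B|}$ to integrate by parts once, producing the two boundary terms and the bulk term $-\frac{1}{\omega}\int_0^t v_{\omega}\cdot DW(x_{\omega})\cdot v_{\omega}\,ds$ with $W=\frac{\nabla p\times b}{|B|}$, and the error bookkeeping (replacing $x_{\omega},h_{\omega}$ by $x,h$ using \Cref{IT1} with $\gamma=0$ on the compact $\overline{\Omega}$) is identical. For the bulk term, the paper splits $\frac{d}{ds}W(x_{\omega})$ into a Hessian-of-$p$ piece, a $Db$ piece and a $\nabla|B|^{-1}$ piece, and treats each by a parallel/perpendicular decomposition, a further integration by parts, and ad hoc pointwise identities (symmetry of the Hessian, \Cref{S3L2}, and the orthonormal frame $\{b,\tfrac{v\times b}{\sqrt{\cdot}},\tfrac{v^{\perp}}{\sqrt{\cdot}}\}$). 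You instead package the quadratic velocity dependence as $v_{\omega}\otimes v_{\omega}$, note that $\frac{d}{ds}(v_{\omega}\otimes v_{\omega})=\omega\,\mathcal{L}_{x_{\omega}}(v_{\omega}\otimes v_{\omega})$ for the commutator action of $\widehat{B}$, project onto the two-dimensional kernel $\operatorname{span}\{b\otimes b,\operatorname{Id}-b\otimes b\}$ (which is exactly the gyro-average, with coefficients $h_{\omega}^2$ and $\tfrac{|v_0|^2-h_{\omega}^2}{2}$), and invert $\mathcal{L}$ on the orthogonal complement to integrate by parts a single time. I checked the kernel computation, the projection coefficients, the bound $\|\mathcal{L}_y^{-1}\|\le|B(y)|^{-1}$, the regularity of $\mathcal{G}$ (which needs $B,p\in C^3$, consistent with the hypotheses), and the closing identities $b\cdot((b\cdot\nabla)W)=-Q$ and $\operatorname{div}W=-2Q$ hence $\operatorname{tr}_{\perp}DW=-Q$; all are correct and deliver $-\tfrac{|v_0|^2+h^2}{2}Q$, matching (\ref{S3E8}) after the $\mu_0$ substitution. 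Your route is more systematic — it makes the gyro-averaging a single projection onto the commutant of $\widehat{B}$ and replaces \Cref{S3L2} by two short divergence/curl identities — at the cost of introducing the operator formalism; the paper's route is more elementary but requires verifying several cross terms vanish one by one. One small point worth making explicit in a write-up: the limit trajectory $x(s)$ stays in $\overline{\Omega}$ because it follows field lines of $B$ and $B\parallel\partial\Omega$, which is needed to evaluate $\nabla p$ and $\operatorname{Hess}(p)$ along $x$ and to get uniform Lipschitz constants for the replacement step.
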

Regarding the last condition, we note that for any $x_0\in \Omega$ and since $B\parallel \partial \Omega$ and the limit trajectory $x$ moves parallel to the $B$ field, it will be the case that there exists some $\tau(x_0)>0$ such that $x_{\omega}(t)\in \Omega$ for all $0\leq t\leq \tau$ and all $\omega \gg 1$. The condition that $B$ should be defined on all of $\mathbb{R}^3$ is of technical nature to be able to make sense of the trajectories of $x_{\omega}$ for all times and be able to apply \Cref{IT1}.

The main novelty of \Cref{IT3} with regards to what is already known in the physics literature is the explicit control of the error term in terms of the time. Usually, in the approaches taken in the physics literature, higher order terms are simply discarded and the corresponding error not analysed, so that it remains unclear up to what time scale a first order approximation remains a good approximation. Here we can see that the leading order correction term of order $\frac{1}{\omega}$ will be dominant up to a timescale of order $\sqrt{\ln(\ln(\omega))}$. Consequently, up to this time scale it is reasonable to focus solely on the leading order correction term. Again, we do not claim optimality of the derived time-scale. We will see in the proof of \Cref{IT3} that the error term contains some terms which grow only linearly in time as well as more complex terms which we can only control "double-exponentially". This seems to suggest that the optimal convergence rate in the context of \Cref{IT3} grows at best linearly in time.

Let us point out that it is more common in the physics literature to consider the toroidal flux function $\psi$ rather than the pressure $p$, but due to the structure of the plasma equilibrium equations we believe it is more natural to formulate the results in terms of the pressure function.
\begin{rem}
	A few comments regarding \Cref{IT3} are due
	\begin{enumerate}
		\item We require a higher regularity of the $B$ field in \Cref{IT3} in comparison to \Cref{IT1} because in \Cref{IT3} we essentially need to compute the 3rd time derivative of $p(x_{\omega}(t))$ which stems from integration by parts. While we also require to take 3rd order derivatives in \Cref{IT1}, in this case we essentially have to differentiate $v_{\omega}(t)$ to gain control on its parallel component. However, due to (\ref{IE1}) we see that $\dot{v}_{\omega}$ involves only $B$ and none of its derivatives so that we end up requiring only twice differentiability of the magnetic field.
		\item The qualitative time-dependent error estimate in \Cref{IT3} corresponds to the best possible estimate with $\gamma=0$ in \Cref{IT1} while we at the same time do not impose any growth rate condition at infinity. The reason for this is that \Cref{IT3} is only valid for times $t$ up to which $x_{\omega}(t)$ remains within $\Omega$ which is a bounded region. Hence, for all such times we can control terms of the form $|B(x_{\omega}(t))|^{-1}$ by constants which depend only on $B$ and $\Omega$ so that we may assume that the magnetic field in question has a minimal growth of degree zero at infinity.
		\item We do not require the quantities to be analytic or the pressure to be non-constant. Even if $p(x_0)$ is a non-regular pressure level set our formula applies.
		\item We notice that one sufficient condition for the leading order correction term to vanish is to demand that $(B(y)\times \nabla p(y))\cdot \nabla |B|^{-1}(y)=0$ for all $y\in \Omega$. This is easily seen on each regular level set of $p$ to be equivalent to demand that $|B|$ is a first integral of $\mathcal{N}\times B$ where $\mathcal{N}$ is a unit normal along the regular level set (which is a torus). This leads to the notion of isodynamic stellarators which has been well-studied in the physics literature, c.f. \cite{Pal68},\cite{Pal86}, and turned out to be a too restrictive condition to be utilisable in real life stellarator designs, c.f. \cite{Cha86},\cite{BerMosTat86}. We also point out that the fact that $\operatorname{curl}(B)$ is also tangent to the level sets of $p$ implies that $\mathcal{N}\times B$ is div-free as a vector field on the level set and consequently, \cite[Theorem 9]{PPS22}, since $B$ does not vanish on any regular level set, $\mathcal{N}\times B$ is semi-rectifiable so that either all of its field lines are closed or all of its field lines are dense on any fixed regular level set. Further, to put it in physical terms, unless the rotational transform of $\mathcal{N}\times B$ remains constant (and rational) along all regular level sets, it will turn out that $|B|$ is constant on all irrational $\mathcal{N}\times B$ surfaces and by density on all regular level sets, which is also sometimes taken as a definition of isodynamicity \cite[\S 1b]{Pal86}. We in particular see that for an isodynamic stellarator we have $|p(x_{\omega}(t))-p(x_0)| \lesssim \frac{1}{\omega}$ for a time frame of order $t\sim \sqrt{\ln(\ln(\omega))}$, i.e. the level set changes only by a magnitude of order $\frac{1}{\omega}$ for a time scale of order $\sqrt{\ln(\ln(\omega))}$ and hence remains confined for all this time.
		\item We can make the last comment more precise. Let us say that a plasma equilibrium $B$ is (ideally) optimised if for every $\nu>0$ there is a constant $c_{\nu}>0$ such that
		\begin{gather}
	\nonumber
	\left|\int_0^t\left(\frac{|v_0|^2}{|B(x(s))|}-\mu_0\right)(B(x(s))\times \nabla p (x(s)))\cdot \nabla \left(|B|^{-1}\right)(x(s))ds\right|\leq c_{\nu}\text{ for all }t\geq 0
	\end{gather}
		and all $x_0\in \Omega$, $|v_0|\leq \nu$. Then, given any large enough $\omega>0$ there will exist some $T>0$ such that (for fixed $x_0$ and $v_0$) $x_{\omega}([0,T])\subset \Omega$ which follows from \Cref{IT1} and the fact that field lines of $B$ are confined within compact subsets of $\Omega$. We can then define the confinement time by $\tau_{\omega}:=\inf\{t\geq 0\mid x_{\omega}(t)\notin \Omega\}$. If $B$ is an optimised plasma equilibrium, the first order correction from \Cref{IT3} will be of order $\frac{\mathcal{O}(1)}{\omega}$ uniformly in $t$ and the remaining error term will be of the same order as long as $t$ is of order $\sqrt{\ln(\ln(\omega))}$. We recall that we assume that $\operatorname{curl}(B)\parallel \partial\Omega$ which implies that $p$ is constant along $\partial\Omega$, $p(y)=p_b$ for all $y\in \partial\Omega$. If we assume in addition that $\partial\Omega=\{p=p_b\}$, then we see that for large enough $\omega$ it will be the case that $|p_b-p_0|>0$ as long as $t$ is of order $\sqrt{\ln(\ln(\omega))}$ and thus we find $\tau_{\omega}\gtrsim \mathcal{\sqrt{\ln(\ln(\omega))}}$ providing us with a qualitative lower bound of the confinement time.
		\item The term $\frac{(b(x(t))\times v_{\omega}(t))\cdot \nabla p(x(t))}{|B(x(t))|}$ still contains an $\omega$ dependence in $v_{\omega}$ and should be interpreted as an oscillating motion similar in spirit to the gyro motion in the guiding centre particle description in the physics literature \cite[Chapter 4]{IGPW24}. In fact, as soon as we average the displacement, i.e. consider $\frac{1}{T}\int_0^Tp(x_{\omega}(t))dt$ for any fixed $T>0$ it turns out that the corresponding integral containing the term $b(x(t))\times v_{\omega}(t)$ will be of order $\frac{1}{\omega^2}$ and hence not contribute significantly to the average pressure change along a particle trajectory.
	\end{enumerate}
\end{rem}

Our final result aims to draw attention to the existence of "resonant surfaces" on, and near which, even optimised plasma equilibria show poor confinement properties.

We observe first that according to \Cref{IT3} the leading order correction term is at worst of order $\frac{t}{\omega}$ modulo a constant which depends on $B$ and $p$ but is independent of $t$ and $\omega$, so that even non-optimised plasma equilibria drift away at worst linearly in time from the initial pressure surface.
\newline
\newline
It is customary in the physics literature to introduce the notion of omnigenity, which demands that for any periodic zero order trajectory approximation, the average drift of $x_{\omega}$ perpendicular to a given regular surface vanishes along a period, c.f. \cite{LanCat12},\cite{PCHL15}.

It is also well-known that passing particles on irrational pressure surfaces, i.e. surfaces on which all magnetic field lines are dense and particles whose initial conditions $x_0,v_0$ satisfy $|v_0|^2-2|B(y)|\mu_0>0$ for all $y$ on the regular level set $p(x_0)$, where $\mu_0=\frac{|v_0|^2-(b(x_0)\cdot v_0)^2}{2|B(x_0)|}$ and $b(y)=\frac{B(y)}{|B(y)|}$, satisfy the property that the average perpendicular drift vanishes in the limit $t\rightarrow\infty$ when the average is taken up to time $t$, \cite[Appendix C.3]{BMKN23}.

We want to point out that the average drift being zero in the limit $t\rightarrow\infty$ is not a sufficient condition for the average change of a function to stay bounded in the limit $t\rightarrow\infty$, i.e. even though the average drift vanishes in the limit, it is of importance to understand the rate at which the limit is approached, since otherwise the particle might still escape to infinity as $t$ becomes larger.

We shall shortly see that even in the case of quasi-symmetric plasma equilibria, there exists a level set on which the drift away from said level set grows linearly in time, which, as pointed out previously, is the worst possible drift in time. Near this "resonant" surface one can still find a uniform upper bound on the first order correction term independent of $t$, but it gets worse and worse as we approach the resonant surface.

This suggests that the situation for general omnigenious plasma equilibria might be even worse, in the sense that there might exist even more resonant surfaces. These observations suggest a negative answer to a conjecture in \cite[Appendix C.3]{BMKN23} that for omnigenious plasma equilibria a uniform convergence rate of the drift rate on irrational surfaces might be achievable, and hence further efforts must be made in order to ensure that passing particles on irrational surfaces stay confined.
\newline
\newline
Before we state our last result, we recall here shortly some notions pertaining the definition of quasi-symmetric plasma equilibria. We assume throughout that $B$ is a no-where vanishing, div-free, $C^3$-vector field on $\mathbb{R}^3$ which satisfies (\ref{IE3}) in some bounded $C^3$-region $\Omega$ and suitable $C^3$-function $p$. We suppose that $x_0\in \Omega$ is chosen such that $p_0:=p(x_0)$ is a regular level set. This level set $\{p=p_0\}$ then admits an open neighbourhood $U$ consisting only of regular level sets.

On each such neighbourhood $U$ we can define the following vector fields
\begin{gather}
	\label{IE4}
	\eta:= \frac{B}{|B|^2}\text{ and }\xi:=\eta \times \nabla p.
\end{gather}
It is then standard to verify that $[\eta,\xi]=0$ where $[\cdot,\cdot]$ denotes the Lie-bracket of vector fields. Further, by construction, $\xi$ is tangent to the level sets of $p$ and similarly, since $B$ is tangent to the level sets of $p$, $\eta$ is tangent to the level sets of $p$. Now by the construction of action angle coordinates, also known as Arnold-Liouville coordinates, c.f. \cite[Chapter 10]{A89}, we can find a global coordinate chart $\Psi: I\times T^2\rightarrow \Omega$ (where $I$ is some open interval) in which the vector fields $\eta$ and $\xi$ take the form
\begin{gather}
	\label{IE5}
	\eta= \alpha(p) \partial_{\phi}+\beta(p)\partial_{\theta}\text{, }\xi=a(p)\partial_{\phi}+c(p)\partial_{\theta}
\end{gather}
where $\partial_{\phi},\partial_{\theta}$ are the standard coordinate fields on $T^2$ chosen such that the field lines of $\partial_{\phi}$ are homotopic to a poloidal closed curve on the level sets of $p$ and $\partial_{\theta}$ are chosen such that its field lines correspond to a toroidal field line and where $\alpha,\beta,a,c$ are suitable functions which depend solely on the pressure label, i.e. they are constant on every fixed regular level set. To clarify, we say here that a closed curve on a toroidal surface $T^2\cong \Sigma\hookrightarrow \mathbb{R}^3$ is poloidal if it is not contractible within $\Sigma$ and bounds a surface within the finite region bounded by $\Sigma$. Similarly we say that a closed curve on $\Sigma$ is toroidal if it is non-trivial and bounds a surface within the unbounded region bounded by $\Sigma$. The above mentioned action angle coordinates which "linearise" $\eta$ and $\xi$ are referred to as Boozer coordinates, c.f. \cite[Section 2.5]{Hel14}. See also \cite[Part II]{DHCS91} for a detailed discussion of different types of distinguished coordinate systems in plasma fusion.

Quasi-symmetric equilibria can then be defined as follows.

\begin{defn}[Quasi-symmetric plasma equilibrium]
	\label{ID5}
	Let $B\in C^{3}(\mathbb{R}^3,\mathbb{R}^3)$ be a no-where vanishing, div-free vector field which satisfies (\ref{IE3}) within some bounded $C^3$-region $\Omega$ and some $p\in C^{3}(\overline{\Omega},\mathbb{R})$. Then $B$ is called quasi-symmetric, if around any given regular level set of $p$ and fixed Boozer coordinate system, there exist integers $(M,N)\in \mathbb{Z}^2\setminus (0,0)$ and a function $f:I\times S^1\rightarrow [0,\infty)$ such that we have $|B(p,\phi,\theta)|=f(p, M\phi+N\theta)$ for the expression of $|B|$ in the chosen Boozer coordinate system. 
\end{defn}
We note that Boozer coordinates are not unique, because we can always apply the transformations $\phi\rightarrow -\phi$ and $\theta\rightarrow -\theta$, but if we specify the orientation of the poloidal and toroidal curves, the choice becomes unique if we choose $I$ to be the maximal interval of the well-definedness of the action-angle coordinates (up to translations $\phi\rightarrow \phi+\phi_0$, $\theta\rightarrow\theta+\theta_0$ for fixed $\phi_0,\theta_0\in \mathbb{R}$).

Quasi-symmetric plasma equilibria are of great importance for plasma confinement, c.f. the pioneering works \cite{BoozPytt81},\cite{Booz83},\cite{Booz95} and are still an active area of research today, e.g. \cite{LanPaul22},\cite{Landre22}.
\newline
\newline
Our last result is the following, where we recall the following conventions:
\begin{enumerate}
	\item For given $x_0,v_0\in \mathbb{R}^3$, $x_{\omega}$ denotes the solution of (\ref{IE1}) for fixed $x_0,v_0$ and $\omega>0$.
	\item $x(t)$ denotes the limit trajectory as in \Cref{IT1} and $h(t)$ corresponds to the speed of $x$, recall \Cref{IC2}. 
	\item As usual we also let $b(y):=\frac{B(y)}{|B(y)|}$.
	\item We have $\mu_0=\frac{|v_0|^2-(v_0\cdot b(x_0))^2}{2|B(x_0)|}$.
	\item We say that a pair of initial conditions $(x_0,v_0)$ are passing for a plasma equilibrium, if $|v_0|^2-2|B(y)|\mu_0>0$ for all $y\in \{p(z)=p(x_0)\}$.
\end{enumerate}
\begin{thm}[Third Main Theorem: Resonant surfaces]
	\label{IT6}
	Let $B\in C^3(\mathbb{R}^3,\mathbb{R}^3)$ be a no-where vanishing div-free vector field satisfying (\ref{IE3}) for a bounded region $\Omega$, $\partial\Omega\in C^3$, and a function $p\in C^{3}(\overline{\Omega})$. Further, let $x_0\in \Omega$ be such that $p_0:=p(x_0)$ is a regular level set and $v_0\in \mathbb{R}^3$ such that $(x_0,v_0)$ are passing. If $B$ is quasi-symmetric, then for a suitable $C>0$ which is independent of $t$ and $\omega$ we have
	\begin{gather}
		\nonumber
		p(x_{\omega}(t))=p_0+\frac{(b(x(t))\times v_{\omega}(t))\cdot \nabla p(x(t))}{|B(x(t))|\omega}-\frac{(b(x_0)\times v_0)\cdot \nabla p(x_0)}{|B(x_0)|}
		\\
		\nonumber
		+\operatorname{sgn}(v_0\cdot b(x_0))\frac{a(p_0)M+c(p_0)N}{\omega (\alpha(p_0) M+\beta(p_0) N)}\left(\frac{|h(t)|}{|B(x(t))|}-\frac{|v_0\cdot b(x_0)|}{|B(x_0)|}\right)+\frac{\mathcal{O}(1)}{\omega^2}\exp(C\exp(Ct^2))
	\end{gather}
	whenever $\alpha(p_0) M+\beta(p_0) N\neq 0$ and
	\begin{gather}
		\nonumber
		p(x_{\omega}(t))=p_0+\frac{(b(x(t))\times v_{\omega}(t))\cdot \nabla p(x(t))}{|B(x(t))|\omega}-\frac{(b(x_0)\times v_0)\cdot \nabla p(x_0)}{|B(x_0)|}
		\\
		\nonumber
		-\operatorname{sgn}(v_0\cdot b(x_0))\frac{(a(p_0)M+c(p_0)N)(|v_0|^2-\mu_0|B(x_0)|)\rho(t)}{\omega |B(x_0)|^2\sqrt{|v_0|^2-2\mu_0|B(x_0)|}}f^{\prime}(M\phi_0+N\theta_0)+\frac{\mathcal{O}(1)}{\omega^2}\exp(C\exp(Ct^2))
	\end{gather}
	whenever $\alpha(p_0) M+\beta(p_0) N=0$, where $\rho(t)=\int_0^t|B(x(s))|\sqrt{|v_0|^2-2\mu_0|B(x(s))|}ds$, where $\alpha,\beta,a,c$ are the functions in the expression (\ref{IE5}), $\operatorname{sgn}(s)=+1$ if $s>0$ and $\operatorname{sgn}(s)=-1$ if $s<0$ and where $|B|=f$ as in \Cref{ID5}, $(p_0,\phi_0,\theta_0)$ are the Boozer coordinate expression of $x_0$ and the derivative of $f$ is computed w.r.t. to its second variable.
\end{thm}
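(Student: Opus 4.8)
The first three terms and the error term $\frac{\mathcal{O}(1)}{\omega^2}\exp(C\exp(Ct^2))$ are already provided by \Cref{IT3} (with $p_0=p(x_0)$), so the entire content of the theorem is to evaluate, under the quasi-symmetry hypothesis of \Cref{ID5}, the single integral term of \Cref{IT3}, namely
\[
I(t):=\int_0^t\left(\frac{|v_0|^2}{|B(x(s))|}-\mu_0\right)(B(x(s))\times\nabla p(x(s)))\cdot\nabla(|B|^{-1})(x(s))\,ds .
\]
Since $p_0$ is a regular level set it lies in an open tube $U$ of regular level sets carrying a Boozer chart $\Psi:I\times T^2\to U$ as in (\ref{IE5}); because $B$ is tangent to the level sets of $p$ (immediate from (\ref{IE3})) and the limit trajectory from \Cref{IT1} satisfies $\dot x=h\,b$, it stays on the single torus $\{p=p_0\}$, hence on $x([0,t])$ we have $p\equiv p_0$ and, by compactness, $x([0,t])\subset U$ for the relevant times. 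Everything is then computed in the Boozer coordinates $(p,\phi,\theta)$ along this trajectory, with the $p$-slot frozen at $p_0$.

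By the definition (\ref{IE4}) of $\eta$ and $\xi$ one has $B\times\nabla p=|B|^2(\eta\times\nabla p)=|B|^2\xi$, so the vector factor in $I(t)$ equals $|B|^2\,\xi(|B|^{-1})$, the derivative of $|B|^{-1}$ along $\xi$. Quasi-symmetry gives $|B|=f(p,M\phi+N\theta)$; writing $u:=M\phi+N\theta$ and using $\xi=a(p)\partial_\phi+c(p)\partial_\theta$ from (\ref{IE5}) and the chain rule, $\xi(|B|^{-1})=-\frac{a(p)M+c(p)N}{f^2}\,\partial_u f$, so along the trajectory (where $p\equiv p_0$ and $|B|=f$) the vector factor reduces to $-(a(p_0)M+c(p_0)N)\,f'(u(s))$ with $f'$ the derivative in the second slot. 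Setting $A:=a(p_0)M+c(p_0)N$ and $D:=\alpha(p_0)M+\beta(p_0)N$, this gives $I(t)=-A\int_0^t\left(\frac{|v_0|^2}{f(u(s))}-\mu_0\right)f'(u(s))\,ds$. I also record the evolution of $u$: since $\eta=B/|B|^2=\alpha(p)\partial_\phi+\beta(p)\partial_\theta$, one has $b=B/|B|=|B|(\alpha(p)\partial_\phi+\beta(p)\partial_\theta)$, hence $\dot x=h\,b$ yields $\dot u(s)=h(s)\,|B(x(s))|\,D$; and \Cref{IC2} together with the passing assumption forces $h$ to be nowhere zero, so $h(s)=\operatorname{sgn}(v_0\cdot b(x_0))\sqrt{|v_0|^2-2\mu_0|B(x(s))|}$ with a fixed sign.

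The two cases of the theorem are exactly $D\neq0$ and $D=0$. If $D\neq0$, then $\dot u$ has constant sign, so $u$ is a strictly monotone reparametrisation; substituting $ds=du/(h|B|D)$ turns $I(t)$ into $-\frac{A}{D}\operatorname{sgn}(v_0\cdot b(x_0))\int_{u_0}^{u(t)}\frac{(|v_0|^2-\mu_0 f)\,f'}{f^2\sqrt{|v_0|^2-2\mu_0 f}}\,du$, and one checks by direct differentiation that $\frac{d}{du}\!\left(\frac{\sqrt{|v_0|^2-2\mu_0 f(u)}}{f(u)}\right)=-\frac{(|v_0|^2-\mu_0 f)\,f'}{f^2\sqrt{|v_0|^2-2\mu_0 f}}$, so the integral telescopes; evaluating at $u_0,u(t)$ and substituting back $f(u(t))=|B(x(t))|$, $\sqrt{|v_0|^2-2\mu_0 f(u(t))}=|h(t)|$, $f(u_0)=|B(x_0)|$ and $\sqrt{|v_0|^2-2\mu_0|B(x_0)|}=|v_0\cdot b(x_0)|$ gives the first displayed formula. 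If $D=0$, then $\dot u\equiv0$, so $u\equiv M\phi_0+N\theta_0$ and consequently $|B(x(s))|\equiv|B(x_0)|$, $h(s)\equiv v_0\cdot b(x_0)$ and $f'(u(s))\equiv f'(M\phi_0+N\theta_0)$ are all constant in $s$; then $I(t)$ is linear in $t$, and writing $t$ in terms of $\rho(t)$ — which here equals $|B(x_0)|\sqrt{|v_0|^2-2\mu_0|B(x_0)|}\,t$ — produces the second displayed formula. Dividing $I(t)$ by $\omega$ and inserting into \Cref{IT3} finishes the proof.

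The real work is the Boozer-coordinate computation in the second step: one must exploit that $\alpha,\beta,a,c$ are flux functions, hence constant along $x(t)$ since $p\equiv p_0$ there, and that the only way the non-trivial geometry enters the directional derivative of $|B|^{-1}$ is through the combination $M\phi+N\theta$ — precisely the quasi-symmetry input. The secondary points are to justify that the Boozer chart covers the whole arc $x([0,t])$ (regularity of $p_0$ plus confinement of the limit trajectory to one pressure torus) and to keep the sign of $h$, equivalently $\operatorname{sgn}(v_0\cdot b(x_0))$, consistent through the substitution; the passing hypothesis is exactly what guarantees $h$ has a fixed sign, legitimising both the change of variables $s\mapsto u$ when $D\neq0$ and the identity $h(s)=\operatorname{sgn}(v_0\cdot b(x_0))\sqrt{|v_0|^2-2\mu_0|B(x(s))|}$.
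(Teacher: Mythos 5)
Your proposal is correct and follows essentially the same route as the paper: reduce to the integral term of \Cref{IT3}, identify $(B\times\nabla p)\cdot\nabla(|B|^{-1})=|B|^2\,\xi(|B|^{-1})=-(a(p_0)M+c(p_0)N)f'(M\phi+N\theta)$ via quasi-symmetry in Boozer coordinates, and then change variables to $u=M\phi+N\theta$ using an explicit antiderivative. The only cosmetic difference is that you substitute $s\mapsto u(s)$ directly through $\dot u=h|B|(\alpha M+\beta N)$, whereas the paper first reparametrises by the flow of $\eta=B/|B|^2$ (the variable $\rho$) and packages the antiderivative as $F(|B|)$ with $F(\sigma)=\int_{|B(x_0)|}^{\sigma}\frac{\mu_0\tau-|v_0|^2}{\tau^2\sqrt{|v_0|^2-2\mu_0\tau}}\,d\tau$; your primitive $\sqrt{|v_0|^2-2\mu_0 f}/f$ is exactly $F\circ f$ up to the additive constant. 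One remark: in the resonant case $\alpha M+\beta N=0$ your direct computation (the integrand is a function of position only and is constant on the $\eta$-orbit) yields the formula \emph{without} the $\operatorname{sgn}(v_0\cdot b(x_0))$ prefactor, which matches the displayed statement only when $v_0\cdot b(x_0)>0$; this is not a flaw of your argument but rather exposes that the sign bookkeeping in the paper's second display (inherited from its ``w.l.o.g. $h(0)>0$'' reduction) does not survive the case $h(0)<0$, since reversing the traversal direction cannot change the value of a line-position integral whose integrand is orientation-independent.
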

It is customary in physics literature to assume that the plasma equilibria in question have a non-vanishing toroidal component which amounts to assuming that $\beta\neq 0$ and one then defines the rotational transform $\iota$ of the field lines of $B$ by $\iota(p):=\frac{\alpha(p)}{\beta(p)}$. \Cref{IT6} therefore tells us that, since $|h(t)|\leq |v_0|$, as long as the rotational transform of $B$ is bounded away from $-\frac{N}{M}$ we find that $p(x_{\omega}(t))\approx p_0+\mathcal{O}\left(\frac{1}{\omega}\right)$ for a time frame of order $t\sim \sqrt{\ln(\ln(\omega))}$. On the other hand, if we are on a "resonant" surface satisfying $\iota(p_0)=-\frac{N}{M}$, we see that $\rho(t)\sim t$ because we are dealing with a passing particle and further $aM+cN\neq 0$ on this surface because $\eta$ and $\xi$, c.f. (\ref{IE4}),(\ref{IE5}), are linearly independent. This shows that, unless $|B|$ is constant on the resonant surface, there will be initial conditions such that $p(x_{\omega}(t))\approx p_0+\frac{t\Theta(1)}{\omega}$ so that the particle will drift away linearly in time from the resonant surface which, as we have pointed out before, is the worst possible drift behaviour.

In conclusion, even for highly optimised plasma equilibria, such as quasi-symmetric equilibria one should make either sure that the hot plasma avoids coming close to these resonant surfaces or otherwise ensure that the magnetic field strength is constant on these surfaces to obtain good confinement across all pressure surfaces.
\section{Proof of \Cref{IT1} \& \Cref{IC2}}
We start by establishing the well-posedness of the initial value problem (\ref{IE1}).
\begin{lem}[Existence of global particle path]
	\label{L1}
	Let $B\in C^{0,1}_{\operatorname{loc}}(\mathbb{R}^3,\mathbb{R}^3)$. Then for every $x_0,v_0\in \mathbb{R}^3$ and every $\omega\in \mathbb{R}$ the following initial value problem
	\begin{gather}
		\label{E1}
		\ddot{x}(t)=\omega \dot{x}(t)\times B(x(t))\text{, }x(0)=x_0\text{ and }\dot{x}(0)=v_0
	\end{gather}
	admits a unique solution of class $C^{2,1}_{\operatorname{loc}}(\mathbb{R})$ defined for all times $t\in \mathbb{R}$.
\end{lem}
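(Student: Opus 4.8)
The plan is to rewrite \eqref{E1} as a first-order system and apply the standard Picard--Lindel\"of theory together with an a priori bound coming from conservation of speed. Introducing $v:=\dot x$, the problem \eqref{E1} is equivalent to
\begin{gather}
	\nonumber
	\frac{d}{dt}\begin{pmatrix} x \\ v \end{pmatrix}=\begin{pmatrix} v \\ \omega\, v\times B(x) \end{pmatrix},\qquad \begin{pmatrix} x(0) \\ v(0) \end{pmatrix}=\begin{pmatrix} x_0 \\ v_0 \end{pmatrix}.
\end{gather}
Since $B\in C^{0,1}_{\operatorname{loc}}(\mathbb{R}^3,\mathbb{R}^3)$ and the cross product is bilinear, the right-hand side $F(x,v):=(v,\omega v\times B(x))$ is continuous on $\mathbb{R}^6$ and locally Lipschitz in $(x,v)$: on any ball the map $v\mapsto \omega v\times B(x)$ is globally Lipschitz in $v$ with constant $|\omega|\sup|B|$, and Lipschitz in $x$ with constant $|\omega|\,(\sup|v|)\,\operatorname{Lip}(B)$. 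Hence the Picard--Lindel\"of theorem yields a unique maximal solution $(x,v)\in C^1((T_-,T_+),\mathbb{R}^6)$ with $-\infty\leq T_-<0<T_+\leq \infty$.

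The second step is the a priori estimate. Differentiating, $\frac{d}{dt}|v(t)|^2=2v(t)\cdot\dot v(t)=2\omega\, v(t)\cdot\big(v(t)\times B(x(t))\big)=0$, so $|v(t)|=|v_0|$ on the whole interval of existence. Consequently $|x(t)-x_0|\leq |v_0|\,|t|$, so on any bounded time interval the solution $(x,v)$ remains inside a fixed compact subset of $\mathbb{R}^6$. The standard continuation criterion (a maximal solution either is global or leaves every compact set as $t\to T_\pm$) then forces $T_-=-\infty$ and $T_+=+\infty$; this is the only place where a little care is needed, but it is entirely routine given the uniform bound. Uniqueness on all of $\mathbb{R}$ follows from local uniqueness in Picard--Lindel\"of.

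Finally, for the regularity claim: from $\dot x=v$ and $\dot v=\omega v\times B(x)$ we get $x\in C^2_{\operatorname{loc}}(\mathbb{R})$ immediately. On any compact time interval, $v$ is Lipschitz (its derivative $\omega v\times B(x)$ is bounded there), and $x$ is Lipschitz; since $B\in C^{0,1}_{\operatorname{loc}}$, the composition $t\mapsto B(x(t))$ is Lipschitz on that interval, and therefore $\ddot x=\omega\, v\times B(x)$, being a product of two locally Lipschitz bounded functions, is locally Lipschitz in $t$. Hence $x\in C^{2,1}_{\operatorname{loc}}(\mathbb{R})$, as claimed. I do not anticipate any genuine obstacle here; the only subtlety worth writing out carefully is the global extension argument in the second step.
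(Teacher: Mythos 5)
Your argument is correct and follows essentially the same route as the paper: rewrite \eqref{E1} as the first-order system for $(x,v)$ with locally Lipschitz right-hand side, invoke Picard--Lindel\"of, use conservation of $|v|$ to trap the maximal solution in a compact set and conclude global existence via the continuation (escape) criterion, then bootstrap to get $C^{2,1}_{\operatorname{loc}}$ regularity. No issues.
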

\begin{proof}[Proof of \Cref{L1}]
	We consider the vector field $V:\mathbb{R}^3\times \mathbb{R}^3\rightarrow \mathbb{R}^3\times\mathbb{R}^3$, $(x,v)\mapsto (v,\omega v\times B(x))$ and observe that since $B\in C^{0,1}_{\operatorname{loc}}$ we find $V\in C^{0,1}_{\operatorname{loc}}(\mathbb{R}^3\times \mathbb{R}^3,\mathbb{R}^3\times \mathbb{R}^3)$ and thus it admits a maximal $C^1$-integral curve starting at $(x_0,v_0)$. Then $\dot{v}(t)=\omega v(t)\times B(x(t))$ is in turn of class $C^{0,1}$ and consequently $v$ of class $C^{1,1}$ and so $x$ of class $C^{2,1}$. To see that this flow is global we observe that $|v(t)|^2$ is preserved in time, i.e. $|v(t)|=|v_0|$ for all $t$ for which the integral curve is defined and further, $|x(t)|=\left|x_0+\int_0^tv(s)ds\right|\leq |x_0|+\int_0^t|v(s)|ds=|x_0|+|v_0||t|$ and therefore $(x(t),v(t))$ is always contained in a compact subset of $\mathbb{R}^3\times \mathbb{R}^3$ for all finite times $t$. Thus the escape lemma, \cite[Lemma 9.19]{L12}, implies that the integral curve is global.
\end{proof}
\begin{proof}[Proof of \Cref{IT1}]
	The proof consists of 3 steps. In the first step we fix any sequence $\omega_n\rightarrow\infty$ and show that for any fixed $0\leq \alpha<1$ and any given subsequence of $(\omega_n)_n$ there exists yet again another subsequence converging in $C^{0,\alpha}_{\operatorname{loc}}(\mathbb{R})$ to some limit function $x$ which, a priori, might be distinct for distinct initial subsequences.
	
	In the second step we show that this limit function satisfies a second order ODE which by uniqueness of solutions to ODEs will show that this limit is in fact independent of the initially chosen subsequence so that the convergence of the original sequence will follow from the subsequence principle.
	
	In the last step we will establish the claimed convergence rate.
	\newline
	\newline
	\underline{Step 1:} We start by setting $h_{\omega}(t):=v_{\omega}(t)\cdot b(x_{\omega}(t))$, $v^\parallel_{\omega}(t):=h_{\omega}(t)b(x_{\omega}(t))$, $v^\perp_{\omega}(t):=v_{\omega}(t)-v^\parallel_{\omega}(t)$ for fixed $\omega>0$, where $b(y):=\frac{B(y)}{|B(y)|}$. We can therefore write
	\begin{gather}
		\nonumber
		x_{\omega}(t)-x_0=\int_0^t v_{\omega}(s)ds=\int_0^tv^\parallel_{\omega}(s)ds+\int_0^tv^\perp_{\omega}(s)ds.
	\end{gather}
	Our goal now is to show that the $C^{0,1}_{\operatorname{loc}}(\mathbb{R})$-norm of $x_{\omega}$ is bounded in the sense that its $C^{0,1}$-norm is uniformly bounded on any compact interval. Then standard compact embedding results will imply the convergence (upon possibly passing to a subsequence) w.r.t. $C^{0,\alpha}$ on any compact interval, c.f. \cite[8.6]{Alt12}. The key observation is that $\dot{v}_{\omega}(t)\cdot v_{\omega}(t)=0$ for all $t$ and $\omega$ by the defining ODE (\ref{E1}) so that in conclusion $|v_{\omega}(t)|^2=|v_0|^2$ for all $t$ and $\omega$.
	
	We therefore find $|x_{\omega}(t)-x_0|\leq |v_0|t$ and similarly $|\dot{x}_{\omega}(t)|=|v_{\omega}(t)|=|v_0|$ and thus the $C^{0,1}$-norm of $x_{\omega}$ is bounded independently of $\omega$ on any compact interval. Hence, by compactness of the embedding $C^{0,1}([0,T])\hookrightarrow C^{0,\alpha}([0,T])$ for all $0\leq \alpha <1$ and any fixed $T>0$ we see that every subsequence of $x_{\omega_n}$ has yet another subsequence converging in $C^{0,\alpha}_{\operatorname{loc}}(\mathbb{R})$, i.e. converging on any compact interval, to some limit function $x(t)$.
	\newline
	\newline
	\underline{Step 2:} In this step we will show that if $x_{\omega_n}$ with $\omega_n\rightarrow \infty$ converges to some limit function, then it us uniquely characterised by a certain 2nd order ODE, so that the limit will turn out to be independent of the sequence $(\omega_n)_n$, completing the convergence proof.
	
	We start by setting $y_{\omega}(t):=\int_0^tv^\perp_{\omega}(s)ds$ and show now that $y_{\omega}(t)\rightarrow 0$ in $C^0_{\operatorname{loc}}$. To see this, we observe that we have the identity
	\begin{gather}
		\label{S2E2}
		v^\perp_{\omega}(s)=b(x_{\omega}(s))\times (v_{\omega}(s)\times b(x_{\omega}(s)))=\frac{b(x_{\omega}(s))\times \dot{v}_{\omega}(s)}{\omega |B(x_{\omega}(s))|}
	\end{gather}
	where $b(y)=\frac{B(y)}{|B(y)|}$. We can therefore perform an integration by parts and find
	\begin{gather}
		\nonumber
		y_{\omega}(t)=\frac{b(x_{\omega}(t))\times v_{\omega}(t)}{\omega |B(x_{\omega}(t))|}-\frac{b(x_0)\times v_0}{\omega|B(x_0)|}-\frac{1}{\omega}\int_0^t\frac{d}{ds}\left(\frac{b(x_{\omega}(s))}{|B(x_{\omega}(s))|}\right)\times v_{\omega}(s)ds.
	\end{gather}
	Now, $|b(x_{\omega}(t))|=1$ for all $t,\omega$ since $b$ is a unit field. Further, $|v_{\omega}(t)|=|v_0|$ for all $t$ and $\omega$. Then on the one hand, since $|x_{\omega}(t)|\leq |x_0|+|v_0|t$ it follows immediately that $\frac{1}{|B(x_{\omega}(t))|}$ can be bounded above by some constant depending on $t$, but independent of $\omega$. On the other hand, if in addition $B$ has a minimal growth of order $\gamma$ at infinity we find for $|x_{\omega}(t)|\leq R$, $|B(x_{\omega}(t))|\geq \min_{y\in B_R(0)}|B(y)|=:\frac{1}{c}$ and for $|x_{\omega}(t)|\geq R$, $|B(x_{\omega}(t))|\geq \frac{C}{|x_{\omega}(t)|^\gamma}$ so that in the latter case $\frac{C}{|B(x_{\omega}(t))|}\leq |x_{\omega}(t)|^{\gamma}\leq (|v_0|t+|x_0|)^{\gamma}$ where we used our standard estimate $|x_{\omega}(t)-x_0|\leq |v_0|t$. Overall $\left|\frac{b(x_{\omega}(t))\times v_{\omega}(t)}{|B(x_{\omega}(t))|}\right|\leq \hat{C}(1+t^{\gamma})$ for a suitable constant $\hat{C}$ which is independent of $t$ and $\omega$. From now on we assume w.l.o.g. that $B$ grows polynomially of order $\gamma$ at infinity, since otherwise we can always use the same local argument as we presented just now with a constant depending on the upper threshold $T$ of any fixed interval $[0,T]$.
	
	As for the integrand, we observe that $\frac{d}{ds}\frac{b(x_{\omega})}{|B(x_{\omega})|}=\frac{Db(x_{\omega}(s))\cdot v_{\omega}(s)}{|B(x_{\omega}(s))|}-\frac{b(x_{\omega}(s))}{|B(x_{\omega}(s))|^3}\nabla \frac{|B|^2}{2}\cdot v_{\omega}(s)$ and that further $Db(y)=\frac{DB(y)}{|B(y)|}+(\nabla |B|^{-1})\cdot B^{\operatorname{Tr}}(y)$ so that, using the fact that $B$ is globally Lipschitz continuous, we find that
	\begin{gather}
		\nonumber
		\left|\frac{d}{ds}\frac{b(x_{\omega}(s))}{|B(x_{\omega}(s))|}\right|\leq \frac{C}{|B(x_{\omega}(t))|^2}\leq \hat{C}(|v_0|t+|x_0|)^{2\gamma}
	\end{gather}
	for constants $C,\hat{C}$ which are independent of $t$ and $\omega$ and so consequently
	\begin{gather}
		\label{S2E3}
		|y_{\omega}(t)|\leq \frac{C}{\omega}\left(1+t^{2\gamma+1}\right)\text{ for all }t\geq 0
	\end{gather}
	where $C>0$ is some constant which is independent of $t$ and $\omega$.
	\newline
	\newline
	We now turn to the remaining term which we abbreviate
	\begin{gather}
		\nonumber
		z_{\omega}(t):=x_0+\int_0^tv^\parallel_{\omega}(s)ds=x_0+\int_0^th_{\omega}(s)b(x_{\omega}(s))ds
	\end{gather}
	where we recall that $h_{\omega}(s)=v_{\omega}(s)\cdot b(x_{\omega}(s))$ and that we assume that we are given a subsequence of a subsequence of $(\omega_n)_n$ such that $x_{\omega}$ converges to a limit function $x$ in $C^{0,\alpha}_{\operatorname{loc}}(\mathbb{R})$ for every $0\leq \alpha <1$. We show that under these circumstances $h_{\omega}(t)$ will also converge in $C^{0,\alpha}_{\operatorname{loc}}(\mathbb{R})$ after possibly passing to yet another subsequence. To see this we notice first that $|h_{\omega}(s)|\leq |v_0|$ where we used that $b$ is a unit field and that the speed is preserved in time. Further, we compute
	\begin{gather}
		\nonumber
		\left|\frac{d}{ds}h_{\omega}(s)\right|=\left|(Db)(x_{\omega}(s))\cdot v_{\omega}(s)\right|\leq \left|Db(x_{\omega}(s))\right||v_0|
	\end{gather}
	where we used that $\dot{v}_{\omega}(s)$ is perpendicular to $b(x_{\omega}(s))$ according to (\ref{E1}) and that speed is preserved in time. Lastly, we observe that $|x_{\omega}(s)|\leq |x_0|+|v_0|s\leq |x_0|+|v_0|t$ for all $0\leq s\leq t$ is bounded on any compact interval so that, by continuity of $Db$ it follows that $|Db(x_{\omega}(s))|\leq c$ for some constant $c$ which may depend on $t$. This shows that the $C^{0,1}$-norm of $h_{\omega}$ is bounded on any compact interval so that (after possibly passing to a subsequence) we may assume that $h_{\omega}(t)$ converges to some limit function $h(t)$.
	
	This allows us to define $v(s):=h(s)b(x(s))$ where $h$ is the limit function of our subsequence of $h_{\omega_n}$ and $x$ is the limit trajectory of the corresponding subsequence of $x_{\omega_n}$. With this definition and using that $y_{\omega}(t)\rightarrow 0$ in $C^0_{\operatorname{loc}}(\mathbb{R})$, see (\ref{S2E3}), we see from the definition of $z_{\omega}$ and the relation $z_{\omega}=x_{\omega}-y_{\omega}$ that
	\begin{gather}
		\nonumber
		x(t)=x_0+\int_0^tv(s)ds=x_0+\int_0^th(s)b(x(s))ds
	\end{gather}
	which already shows that if the limit trajectory is unique, it must follow a magnetic field line. Further, since the $C^{0,1}$-norm of the $h_{\omega}$ was bounded we conclude that $h$ is locally Lipschitz continuous and the same is true for $x_{\omega}$ so that also $x$ is locally $C^{0,1}$. The above identity allows us to conclude at this point that $x$ is locally $C^{1,1}$.
	
	Our goal now will be to verify that the pair $(h,x)$ satisfies a system of first order ODEs and infer from that that these limit functions are in fact independent of the chosen subsequences. We set $h_0:=(v_0\cdot b(x_0))=h_{\omega}(0)$ and write
	\begin{gather}
		\label{S2E4}
		h_{\omega}(t)-h_0=\int_0^t\frac{d}{ds}h_{\omega}(s)ds=\int_0^tv^{\operatorname{Tr}}_{\omega}(s)\cdot Db(x_{\omega}(s))\cdot v_{\omega}(s)ds
	\end{gather}
	where we used again that $\dot{v}_{\omega}$ is perpendicular to $b(x_{\omega}(s))$. We observe now that $b^{\operatorname{Tr}}(y)\cdot Db(y)=\nabla |b|^2=0$ because $b$ is a unit field and hence
	\begin{gather}
		\nonumber
		h_{\omega}(t)-h_0=\int_0^tv^{\perp}_{\omega}(s)\cdot \left(Db(x_{\omega}(s))\cdot v_{\omega}(s)\right)ds.
	\end{gather}
	We use once more that we can express $v^\perp_{\omega}(s)=\frac{b(x_{\omega}(s))\times \dot{v}_{\omega}(s)}{\omega |B(x_{\omega}(s))|}$, (\ref{S2E2}), and get
	\begin{gather}
		\nonumber
		h_{\omega}(t)-h_0=\frac{1}{\omega}\int_0^t\dot{v}_{\omega}(s)\cdot \left(\frac{(Db(x_{\omega}(s))\cdot v_{\omega}(s))\times b(x_{\omega}(s))}{ |B(x_{\omega}(s))|}\right)ds.
	\end{gather}
	We integrate by parts and find
	\begin{gather}
		\nonumber
		h_{\omega}(t)-h_0=\frac{1}{\omega}v_{\omega}(s)\cdot \frac{Db(x_{\omega}(s))\cdot v_{\omega}(s)\times b(x_{\omega}(s))}{ |B(x_{\omega}(s))|}\mid_0^t
		\\
		\nonumber
		-\frac{1}{\omega}\int_0^tv_{\omega}(s)\cdot \frac{d}{ds}\left(\frac{(Db(x_{\omega}(s))\cdot v_{\omega}(s))\times b(x_{\omega}(s))}{ |B(x_{\omega}(s))|}\right)ds.
	\end{gather}
	We note that $\frac{d}{ds}b(x_{\omega}(s))=Db(x_{\omega})\cdot v_{\omega}(s)$ so that one term cancels immediately in the integrand. The remaining terms can be handled similarly as in the case of the estimate of $y_{\omega}$, (\ref{S2E3}), with the exception where the derivative in the integrand acts upon $v_{\omega}$, in which case we use once more the defining equation of $x_{\omega}$ (\ref{E1}). We then arrive at the following result
	\begin{gather}
		\nonumber
		h_{\omega}(t)-h_0=\int_0^t(v_{\omega}(s)\times b(x_{\omega}(s)))\cdot (Db(x_{\omega}(s))\cdot (v_{\omega}(s)\times b(x_{\omega}(s))))ds+\frac{\mathcal{O}(1)}{\omega}(1+t^{3\gamma+1})
	\end{gather}
	where $\mathcal{O}(1)$ is a term which can be bounded above by some constant which is independent of $t$ and $\omega$. More precisely, we have established the identity
	\begin{gather}
		\nonumber
		\int_0^tv^{\operatorname{Tr}}_{\omega}(s)\cdot Db(x_{\omega}(s))\cdot v_{\omega}(s)ds
		\\
		\label{S2E5}
		=\int_0^t(v_{\omega}(s)\times b(x_{\omega}(s)))\cdot (Db(x_{\omega}(s))\cdot (v_{\omega}(s)\times b(x_{\omega}(s))))ds+\frac{\mathcal{O}(1)}{\omega}(1+t^{3\gamma+1}).
	\end{gather}
	We will now make use of the following result whose proof we postpone to after the completion of the proof of \Cref{IT1}.
	\begin{lem}
		\label{S2L2}
		Let $b\in C^1(\mathbb{R}^3,\mathbb{R}^3)$ be a unit vector field, i.e. $|b(x)|=1$ for all $x\in \mathbb{R}^3$, and let $v\in \mathbb{R}^3$ be fixed, then
		\begin{gather}
			\nonumber
			v^{\operatorname{Tr}}\cdot Db(x)\cdot v+(v\times b(x))^{\operatorname{Tr}}\cdot Db(x)\cdot (v\times b(x))
			\\
			\nonumber
			=\operatorname{div}(b)(x)(|v|^2-(v\cdot b(x))^2)-(v\cdot b(x))(v\times b(x))\cdot \operatorname{curl}(b)(x)\text{ for all }x\in \mathbb{R}^3.
		\end{gather}
	\end{lem}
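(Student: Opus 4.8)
The plan is to read the asserted identity as a purely pointwise, algebraic statement. Fix $x\in\mathbb{R}^3$ and abbreviate $b:=b(x)$ and $A:=Db(x)$, so that $A_{ij}=\partial_j b_i$ and $\operatorname{tr}(A)=\operatorname{div}(b)(x)$; both sides then become quadratic forms in the fixed vector $v$. I would first collect three elementary facts about $A$. (i) Differentiating $|b|^2\equiv1$ gives $b^{\operatorname{Tr}}A=\tfrac12\nabla|b|^2=0$, so in particular $b^{\operatorname{Tr}}Aw=0$ for every $w\in\mathbb{R}^3$ and $b^{\operatorname{Tr}}Ab=0$. (ii) A short Levi-Civita-symbol computation (or one explicit example to fix the sign) yields the antisymmetric-part identity
\[
u^{\operatorname{Tr}}Aw-w^{\operatorname{Tr}}Au=-(u\times w)\cdot\operatorname{curl}(b)(x)\qquad\text{for all }u,w\in\mathbb{R}^3 ,
\]
since $u_iw_j(\partial_jb_i-\partial_ib_j)=u_iw_j\,\epsilon_{jik}(\operatorname{curl}b)_k$. (iii) For any orthonormal basis $\{e_1,e_2,e_3\}$ one has $\sum_{k}e_k^{\operatorname{Tr}}Ae_k=\operatorname{tr}(A)=\operatorname{div}(b)(x)$, since conjugation by an orthogonal matrix preserves the trace.

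Next I would reduce to a convenient frame. If $v$ is parallel to $b$ both sides vanish — the left by (i) together with $v\times b=0$, the right since then $|v|^2=(v\cdot b)^2$ and $v\times b=0$ — and in any case both sides are degree-two polynomials in $v$ that agree on the dense set $\{v:v\times b\neq0\}$, so it suffices to treat that case. Assuming $v^\perp:=v-(v\cdot b)b\neq0$, set $e_1:=v^\perp/|v^\perp|$, $e_2:=e_1\times b$, $e_3:=b$, which is an orthonormal basis; note $v=(v\cdot b)b+|v^\perp|e_1$ and $v\times b=v^\perp\times b=|v^\perp|e_2$, so both quadratic forms on the left are expressed through $e_1,e_2,b$ only.

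Expanding $v^{\operatorname{Tr}}Av$ and discarding every term carrying $b$ on the left via (i) leaves $(v\cdot b)|v^\perp|\,e_1^{\operatorname{Tr}}Ab+|v^\perp|^2e_1^{\operatorname{Tr}}Ae_1$, while $(v\times b)^{\operatorname{Tr}}A(v\times b)=|v^\perp|^2e_2^{\operatorname{Tr}}Ae_2$. Adding, the $|v^\perp|^2$-part becomes $|v^\perp|^2\big(e_1^{\operatorname{Tr}}Ae_1+e_2^{\operatorname{Tr}}Ae_2\big)=|v^\perp|^2\big(\operatorname{tr}(A)-b^{\operatorname{Tr}}Ab\big)=(|v|^2-(v\cdot b)^2)\operatorname{div}(b)(x)$ by (iii) and (i); and the cross term, using (ii) with $u=e_1,w=b$ and $b^{\operatorname{Tr}}Ae_1=0$, equals $(v\cdot b)|v^\perp|\,e_1^{\operatorname{Tr}}Ab=-(v\cdot b)|v^\perp|\,(e_1\times b)\cdot\operatorname{curl}(b)(x)=-(v\cdot b)(v\times b)\cdot\operatorname{curl}(b)(x)$ since $e_1\times b=e_2$ and $|v^\perp|e_2=v\times b$. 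Summing the two contributions gives precisely the right-hand side.

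I do not expect a genuine obstacle: the argument is linear algebra plus the single constraint coming from $|b|\equiv1$, and the only point requiring care is the sign in the curl identity (ii) together with the orientation of the frame $\{e_1,e_2,b\}$. Since (iii) holds for any orthonormal basis regardless of orientation, only that one sign really matters, and it is pinned down by a single scalar example.
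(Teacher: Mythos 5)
Your proof is correct, and it takes a genuinely different route from the one in the paper. The paper derives the identity by chaining vector-calculus identities for the field $z=v\times b$: it expands $\nabla(z\cdot b)=0$, rewrites $\operatorname{curl}(v\times b)$ via $\operatorname{curl}(X\times Y)=X\operatorname{div}(Y)-Y\operatorname{div}(X)-[X,Y]$, and repeatedly invokes triple-product and Levi-Civita-connection identities until the divergence and curl terms emerge. You instead freeze the point $x$, treat $Db(x)$ as a single matrix $A$, and reduce everything to three elementary facts: $b^{\operatorname{Tr}}A=0$ from $|b|\equiv 1$, the antisymmetric part of $A$ is encoded by $\operatorname{curl}(b)$, and the trace of $A$ in any orthonormal frame is $\operatorname{div}(b)$. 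Evaluating in the adapted frame $\{v^{\perp}/|v^{\perp}|,\ (v^{\perp}/|v^{\perp}|)\times b,\ b\}$ then makes the left-hand side visibly equal to $|v^{\perp}|^2$ times the trace of $A$ restricted to $b^{\perp}$ plus a single off-diagonal term, which is exactly the divergence and curl contributions on the right. I checked the sign in your antisymmetric-part identity ($u^{\operatorname{Tr}}Aw-w^{\operatorname{Tr}}Au=-(u\times w)\cdot\operatorname{curl}(b)$ with $A_{ij}=\partial_j b_i$, consistent with the paper's convention $z^{\operatorname{Tr}}\cdot Db\cdot z=z\cdot\nabla_z b$) and it is right; the degenerate case $v\parallel b$ is also handled. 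Your argument is arguably more transparent, since it explains structurally why precisely $\operatorname{div}(b)$ and $\operatorname{curl}(b)$ appear (trace and antisymmetric part of $Db$ on the plane orthogonal to $b$), whereas the paper's computation generalises more directly to settings where one wants the intermediate covariant-derivative identities themselves (as it reuses them in the proof of its Lemma 3.2).
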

	We utilise \Cref{S2L2} with $v=v_{\omega}(s)$ and $x=x_{\omega}(s)$ and find, in combination with (\ref{S2E5})
	\begin{gather}
		\nonumber
		2\int_0^tv^{\operatorname{Tr}}_{\omega}(s)\cdot Db(x_{\omega}(s))\cdot v_{\omega}(s)ds
		\\
		\nonumber
		=\int_0^t\operatorname{div}(b)(x_{\omega}(s))(|v_0|^2-h^2_{\omega}(s))-h_{\omega}(s)(v_{\omega}(s)\times b(x_{\omega}(s)))\cdot \operatorname{curl}(b)(x_{\omega}(s))ds+\frac{\mathcal{O}(1)}{\omega}(1+t^{3\gamma+1})
	\end{gather}
	where we used that speed is preserved in time and the definition $h_{\omega}(t)=v_{\omega}(t)\cdot b(x_{\omega}(t))$. We combine this with (\ref{S2E4}) and find the identity
	\begin{gather}
		\nonumber
		h_{\omega}(t)-h_0=\int_0^t\frac{|v_0|^2-h^2_{\omega}(s)}{2}\operatorname{div}(b)(x_{\omega}(s))ds
		\\
		\nonumber
		-\int_0^th_{\omega}(s)(v_{\omega}(s)\times b(x_{\omega}(s)))\cdot \frac{\operatorname{curl}(b)(x_{\omega}(s))}{2}ds+\frac{\mathcal{O}(1)}{\omega}(1+t^{3\gamma+1}).
	\end{gather}
	We observe now that $v_{\omega}(t)\times b(x_{\omega}(t))=\frac{\dot{v}_{\omega}(t)}{\omega |B(x_{\omega}(t))|}$. We can therefore perform yet another integration by parts and obtain similarly as before that
	\begin{gather}
		\label{S2E6}
		h_{\omega}(t)-h_0=\int_0^t\frac{|v_0|^2-h^2_{\omega}(t)}{2}\operatorname{div}(b)(x_{\omega}(s))ds+\frac{\mathcal{O}(1)}{\omega}(1+t^{3\gamma+1}).
	\end{gather}
	We recall that we may assume that the $h_{\omega}$ and $x_{\omega}$ converge in $C^{0,\alpha}_{\operatorname{loc}}$ to a $C^{0,1}$ and $C^{1,1}$ function $h$ and $x$ respectively and that our goal is to show that these limit functions admit a characterisation which uniquely determines them through the choice of initial conditions $x_0,v_0$ and the vector field $B$ alone. In order to achieve this we will derive a second order ODE for $x(t)$ whose coefficients and initial conditions will be uniquely determined by $B,x_0,v_0$ alone which will establish the uniqueness of the limit. This will also imply that $h(t)=\dot{x}(t)\cdot b(x(t))$ is uniquely determined. We observe first that we may perform the limit $\omega\rightarrow\infty$ in (\ref{S2E6}) to find the pointwise identity
	\begin{gather}
		\label{S2E7}
		h(t)=h_0+\int_0^t\frac{|v_0|-h^2(s)}{2}\operatorname{div}(b)(x(s))ds.
	\end{gather}
	We notice first that (\ref{S2E7}) implies that $h\in C^{1,1}_{\operatorname{loc}}$ since we already know that $h\in C^{0,1}_{\operatorname{loc}}$ and $x\in C^{1,1}_{\operatorname{loc}}$. From here we can bootstrap the regularity by recalling that $\dot{x}(t)=h(t)b(x(t))\in C^{1,1}_{\operatorname{loc}}$ and thus $x\in C^{2,1}_{\operatorname{loc}}$ so that in turn by (\ref{S2E7}) $h\in C^{2,1}_{\operatorname{loc}}$ and by induction $h(t),x(t)$ are $C^{\infty}$-smooth whenever $B$ is. Now define the magnetic moment $\mu(t):=\frac{|v_0|^2-h^2(t)}{2|B(x(t))|}$. We claim that $\mu(t)$ is preserved in time. To see this we simply compute the derivative
	\begin{gather}
		\nonumber
		2\frac{d}{dt}\mu(t)=-2h(t)\frac{\dot{h}(t)}{|B(x(t))|}+(|v_0|^2-h^2(t))\frac{d}{dt}\left(|B|^{-1}(x(t))\right).
	\end{gather}
	We make use of (\ref{S2E7}) to see that $2\dot{h}(t)=(|v_0|^2-h^2(t))\operatorname{div}(b)(x(t))$ and further we note that $\operatorname{div}(b)=B\cdot \nabla \left(|B|^{-1}\right)$ by the product rule and because $B$ is div-free so that
	\begin{gather}
		\nonumber
		2\frac{d}{dt}\mu(t)=(|v_0|^2-h^2(t))\left(-h(t)b(x(t))\cdot (\nabla |B|^{-1})(x(t))+\frac{d}{dt}\left(|B|^{-1}(x(t))\right)\right).
	\end{gather}
	Lastly we recall that we have already established that $\dot{x}(t)=h(t)b(x(t))$ and so the chain rule yields $\frac{d}{dt}|B|^{-1}(x(t))=h(t)b(x(t))\cdot \nabla \left(|B|^{-1}\right)(x(t))$ and consequently $\mu(t)=\mu(0)=:\mu_0$ for all times $t$. This also proves \Cref{IC2}.
	
	With this in mind we can find the following equivalent expression for (\ref{S2E7})
	\begin{gather}
		\label{S2E8}
		h(t)=h_0+\mu_0\int_0^t|B(x(s))|\operatorname{div}(b)(x(s))ds.
	\end{gather}
	With this at hand we can compute
	\begin{gather}
		\nonumber
		\ddot{x}(t)=\frac{d}{dt}(h(t)b(x(t)))=\dot{h}(t)b(x(t))+h(t)Db(x(t))\cdot \dot{x}(t)
		\\
		\nonumber
		=\mu_0|B(x(t))|\operatorname{div}(b)(x(t))b(x(t))+h^2(t)Db(x(t))\cdot b(x(t))
		\\
		\nonumber
		=\mu_0|B(x(t))|\operatorname{div}(b)(x(t))b(x(t))+(|v_0|^2-2\mu_0|B(x(t))|)Db(x(t))\cdot b(x(t))
	\end{gather}
	where we used that $\mu$ is preserved in time. The corresponding initial conditions for this second order ODE are $x(0)=x_0$ and $\dot{x}(0)=h_0b(x_0)$ so that the limit function turns out to be the unique solution to the above 2nd order ODE with initial conditions $x(0)=x_0$ and $\dot{x}(0)=h_0b(x_0)$ which overall shows that every subsequence of $x_{\omega}$ admits yet another subsequence which converges to a limit trajectory which itself is independent of the chosen subsequence. This implies the convergence of the original sequence to the unique solution of the above second order ODE.
	\newline
	\newline
	\underline{Step 3:} Here we prove the claimed convergence rate. We start by expressing once more $x_{\omega}(t)=x_0+\int_0^t h_{\omega}(s)b(x_{\omega}(s))ds+\int_0^tv^\perp_{\omega}(s)ds$ and the limit trajectory by $x(t)=x_0+\int_0^t h(s)b(x(s))ds$. We recall that we had defined $y_{\omega}(t)=\int_0^t v^\perp_{\omega}(s)ds$ and found an estimate for $y_{\omega}(t)$ in (\ref{S2E3}) in the previous step. We hence find
	\begin{gather}
		\nonumber
		x_{\omega}(t)-x(t)=\int_0^t h_{\omega}(s)b(x_{\omega}(s))-h(s)b(x(s))ds+y_{\omega}(t)
		\\
		\nonumber
		=\int_0^t h_{\omega}(s)(b(x_{\omega}(s))-b(x(s)))ds+\int_0^t(h_{\omega}(s)-h(s))b(x(s))ds+y_{\omega}(t)
	\end{gather}
	and are left with estimating the two integral terms in the last line. We observe now that $|h_{\omega}(s)|\leq |v_0|$ and that
	\begin{gather}
		\nonumber
		b(x_{\omega}(s))-b(x(s))=\int_0^1 \frac{d}{d\tau}\left(b(x(s)+\tau (x_{\omega}(s)-x(s)))\right)d\tau
		\\
		\nonumber
		=\int_0^1(Db)(x(s)+\tau (x_{\omega}(s)-x(s)))d\tau \cdot (x_{\omega}(s)-x(s)).
	\end{gather}
	We now make use of the fact that $B$ is globally Lipschitz continuous and the growth properties of $B$ to deduce that $|Db(y)|\leq C(1+|y|^\gamma)$ for some $C>0$ independent of $y$ and thus $|Db(x(s)+\tau(x_{\omega}(s)-x(s))|\leq C(1+\left((1-\tau)|x(s)|+\tau|x_{\omega}(s)|\right)^{\gamma})$ and we can further use our standard estimates $|x(s)|\leq |x_0|+|v_0|s$ as well as $|x_{\omega}(s)|\leq |x_0|+|v_0|s$ where we used that $|v_{\omega}(s)|=|v_0|$ and that $|h(s)|$ is bounded above by $|v_0|$. We hence obtain the estimate
	\begin{gather}
		\nonumber
		|x_{\omega}(t)-x(t)|\leq C\frac{1+t^{2\gamma+1}}{\omega}+\int_0^t|h_{\omega}(s)-h(s)|ds+C(1+t^{\gamma+1})\int_0^t|x_{\omega}(s)-x(s)|ds
	\end{gather} 
	where $C>0$ is a constant which does not depend on $t$ and $\omega$. We apply now Gronwall's inequality to deduce
	\begin{gather}
		\label{S2E9}
		|x_{\omega}(t)-x(t)|\leq C\left(\frac{1+t^{2\gamma+1}}{\omega}+\int_0^t|h_{\omega}(s)-h(s)|ds\right)\exp\left(C t^{\gamma+2}\right)
	\end{gather}
	for some, possibly larger, constant $C>0$ which is independent of $t$ and $\omega$.
	\newline
	We now turn to the estimate of $\int_0^t|h_{\omega}(s)-h(s)|ds$. To this end we utilise identities (\ref{S2E6}),(\ref{S2E7}) and write
	\begin{gather}
		\nonumber
		2(h_{\omega}(t)-h(t))=\int_0^t(|v_0|^2-h^2_{\omega}(s))\operatorname{div}(b)(x_{\omega}(s))-(|v_0|^2-h^2(s))\operatorname{div}(b)(x(s))ds+\frac{\mathcal{O}(1)}{\omega}(1+t^{3\gamma+1})
		\\
		\nonumber
		=\int_0^t (\operatorname{div}(b_{\omega}(s))-\operatorname{div}(b(x(s))))(|v_0|^2-h^2_{\omega}(s))ds+\int_0^t\operatorname{div}(b)(x(s))\left(h(s)-h_{\omega}(s)\right)(h(s)+h_{\omega}(s))ds
		\\
		\nonumber
		+\frac{\mathcal{O}(1)}{\omega}(1+t^{3\gamma+1}).
	\end{gather}
	As previously in the case of $Db$ we obtain the estimate $|\operatorname{div}(b)(x(s))|\leq C(1+s^{\gamma})$ and we can use that $|h(s)|$ and $|h_{\omega}(s)|$ are bounded above by $|v_0|$. Further, we can argue as in the case of $b(x_{\omega}(s))-b(x(s))$ that
	\begin{gather}
		\nonumber
		|\operatorname{div}(b)(x_{\omega}(s))-\operatorname{div}(b)(x(s))|\leq C(1+t^{2\gamma+1}) |x_{\omega}(s)-x(s)|
	\end{gather}
	and so we obtain the corresponding estimate
	\begin{gather}
		\nonumber
		|h_{\omega}(t)-h(t)|\leq C(1+t^{\gamma})\int_0^t|h_{\omega}(s)-h(s)|ds+C(1+t^{2\gamma+1})\int_0^t|x_{\omega}(s)-x(s)|ds+\frac{\mathcal{O}(1)}{\omega}\left(1+t^{3\gamma+1}\right).
	\end{gather}
	A further application of Gronwall's inequality yields
	\begin{gather}
		\label{S2E10}
		|h_{\omega}(t)-h(t)|\leq C\left(\frac{1+t^{3\gamma+1}}{\omega}+(1+t^{2\gamma+1})\int_0^t|x_{\omega}(s)-x(s)|ds\right)\exp(Ct^{\gamma+1})
	\end{gather}
	for a suitable constant $C>0$ which is independent of $t$ and $\omega$. We now claim that there is some $0<C(B,x_0,v_0)<\infty$ such that
	\begin{gather}
		\label{S2E11}
		|x_{\omega}(t)-x(t)|\leq \frac{C}{\omega}\exp(C\exp(Ct^{\gamma+2}))\text{ for all }0\leq t<\infty\text{ and all }\omega>0.
	\end{gather}
	To see this, we define first for any given $f:[0,\infty)\rightarrow [0,\infty)$ the set
	\begin{gather}
		\label{S2E12}
		K_f:=\{T\geq 0\mid |x_{\omega}(t)-x(t)|\leq \frac{f(t)}{\omega}\text{ for all }0\leq t \leq T\text{ and all }\omega>0\}
	\end{gather}
	and $\tau_f:=\sup K_f$. We note that the claimed convergence rate is equivalent to the statement that $\tau_f=+\infty$ for the right choice of $f$. We will argue by contradiction by assuming first that for a given $f$ we have $\tau_f<\infty$ and then derive a certain inequality which then must hold true for any given such $f$. Then, we make a specific choice for $f$ which will lead to a contradiction and hence provide the desired convergence rate. 
	First, we observe that $0\in K_f$ for every $f$ since $x_{\omega}(0)=x_0=x(0)$.
	
	Now let us suppose that $0\leq \tau_f<\infty$ for some chosen function $f$. To simplify the argument we note that (\ref{S2E9}) and (\ref{S2E10}) imply the following rougher upper bound with a possibly larger constant $C$
	\begin{gather}
		\nonumber
		|x_{\omega}(t)-x(t)|\leq C\exp(Ct^{\gamma+2})\left(\frac{1}{\omega}+\int_0^t|h_{\omega}(s)-h(s)|ds \right)\text{, }
		\\
		\label{S2E13}
		|h_{\omega}(t)-h(t)|\leq C\exp(Ct^{\gamma+1})\left(\frac{1}{\omega}+\int_0^t|x_{\omega}(s)-x(s)|ds\right).
	\end{gather}
	Now we fix $0<\epsilon<1$ and see that for $\tau\leq t<\tau+\epsilon$ (we drop the subscript $f$ to simplify notation) we have by splitting $\int_0^t|x_{\omega}(s)-x(s)|ds=\int_0^\tau|x_{\omega}(s)-x(s)|ds+\int_{\tau}^t|x_{\omega}(s)-x(s)|ds$ and making use of (\ref{S2E12}) and the definition of $\tau$
	\begin{gather}
		\label{S2E14}
		|h_{\omega}(t)-h(t)|\leq C\exp(Ct^{\gamma+1})\left(\frac{1+\int_0^\tau f(s)ds}{\omega}+\epsilon\|x_{\omega}-x\|_{C^0([\tau,t])}\right)\text{ for }\tau\leq t\leq \tau+\epsilon
	\end{gather}
	where we used (\ref{S2E13}). We note that if $0\leq t\leq \tau$ we obtain an identical estimate as in (\ref{S2E14}) with the exception that the last term in (\ref{S2E14}) will be missing, i.e.
	\begin{gather}
		\nonumber
		|h_{\omega}(t)-h(t)|\leq C\exp(Ct^{\gamma+1})\frac{1+\int_0^\tau f(s)ds}{\omega}\text{ for }0\leq t\leq \tau
	\end{gather}
	which we can insert into (\ref{S2E13}) to deduce for $\tau\leq t\leq \tau+\epsilon$
	\begin{gather}
		\nonumber
		|x_{\omega}(t)-x(t)|\leq C\exp(Ct^{\gamma+2})\left[\frac{1}{\omega}+\frac{C}{\omega}\int_0^t\exp(Cs^{\gamma+1})ds\left(1+\int_0^{\tau}f(s)ds\right)\right]
		\\
		\nonumber
		+C\epsilon^2(\tau+1)\exp\left(C ((\tau+1)^{\gamma+2}+(\tau+1)^{\gamma+1})\right)\|x_{\omega}-x\|_{C^0([\tau,t])}\text{ for all }\tau\leq t\leq \tau+\epsilon
	\end{gather}
	for a possibly yet again larger constant $C>0$, where we used that $\epsilon<1$. We may then take the supremum over $[\tau,T]$ for any fixed $\tau\leq T\leq \tau+\epsilon$ and see that for suitably small $0<\epsilon$ we find
	\begin{gather}
		\nonumber
		|x_{\omega}(T)-x(T)|\leq \|x_{\omega}-x\|_{C^0([\tau,T])}\leq \frac{C\exp(CT^{\gamma+2})}{\omega}\left(1+\int_0^T\exp(Cs^{\gamma+1})ds\left(1+\int_0^{\tau}f(s)ds\right)\right).
	\end{gather}
	We now observe that since $f\geq0$ we have $\int_0^{\tau}f(s)ds\leq \int_0^{t}f(s)ds$ for all $\tau\leq t$. Further, we notice that $\frac{\int_0^t\exp(Cs^{\gamma+1})ds}{\exp(Ct^{\gamma+1})}$ is globally bounded on $[0,\infty)$, where the global bound depends on $C$ and $\gamma$ so that with some additional simple estimates we arrive at the rough upper bound (with a possibly larger constant $C$)
	\begin{gather}
		\label{S2E15}
		|x_{\omega}(t)-x(t)|\leq \frac{C\exp(Ct^{\gamma+2})}{\omega}\left(1+\int_0^tf(s)ds\right)\text{ for all }\tau_f\leq t\leq \tau_f+\epsilon_f
	\end{gather}
	for a suitable constant $C$ which is independent of $f$,$t$, and $\omega$. We now make a specific choice for $f$, namely we define first $g(t):=C\exp(Ct^{\gamma+2})$ with $C$ and $\gamma$ as in (\ref{S2E15}), $h(t):=\frac{g^2(t)+g^{\prime}(t)}{g(t)}$ and finally
	\begin{gather}
		\label{S2E16}
		f(t):=g(0)\exp\left(\int_0^th(s)ds\right)
	\end{gather}
	which is a non-negative function. This function satisfies
	\begin{gather}
		\nonumber
		f^{\prime}(t)=h(t)f(t)=g(t)f(t)+f(t)\frac{g^{\prime}(t)}{g(t)}\Leftrightarrow f(t)=\frac{f^{\prime}(t)}{g(t)}-\frac{f(t)g^{\prime}(t)}{g^2(t)}=\left(\frac{f(t)}{g(t)}\right)^{\prime}.
	\end{gather}
	Integrating the above identity yields
	\begin{gather}
		\nonumber
		\int_0^tf(s)ds=\frac{f(t)}{g(t)}-1\Leftrightarrow 1+\int_0^tf(s)ds=\frac{f(t)}{g(t)}
	\end{gather}
	where we used that $f(0)=g(0)$. We can insert this identity in (\ref{S2E15}) and overall obtain the inequality
	\begin{gather}
		\nonumber
		|x_{\omega}(t)-x(t)|\leq \frac{f(t)}{\omega}\text{ for all }\tau_f\leq t\leq \tau_f+\epsilon_f
	\end{gather}
	for a suitable $\epsilon_f>0$. By definition of $K_f$, recall (\ref{S2E12}) the same inequality holds true for $0\leq t<\tau_f$ which implies that there is some $t_f>\tau_f$ with $t_f\in K_f$, contradicting the definition of $\tau_f$ as the supremum of $K_f$. We conclude that $\tau_f=+\infty$ for this specific choice of $f$ and hence, by definition of $K_f$, we arrive at the inequality
	\begin{gather}
		\label{S2E17}
		|x_{\omega}(t)-x(t)|\leq \frac{f(t)}{\omega}\text{ for all }0\leq t<\infty\text{ and all }0<\omega<\infty
	\end{gather} 
	where $f(t)$ is given by (\ref{S2E16}). Finally, we observe that $g^{\prime}(t)=C(\gamma+2)t^{\gamma+1}g(t)$ so that we obtain a rough upper bound on $h(t)$ of the form $h(t)=\frac{g^2(t)+g^{\prime}(t)}{g(t)}\leq C\exp(Ct^{\gamma+2})$ with a possibly larger constant $C>0$. We can then use once more the fact that $\frac{\int_0^t\exp(Cs^{\gamma+2})ds}{\int_0^t\exp(Ct^{\gamma+2})}$ is globally bounded to arrive at the claimed estimate
	\begin{gather}
		\label{S2E18}
		|x_{\omega}(t)-x(t)|\leq \frac{C}{\omega}\exp\left(C\exp(Ct^{\gamma+2})\right)\text{ for all }t\geq 0\text{, }\omega>0
	\end{gather}
	for a suitable constant $C>0$ which is independent of $\omega$ and $t$.
	
	For future reference of the proof of \Cref{IT3} we note that a combination of (\ref{S2E18}) and (\ref{S2E13}) implies
	\begin{gather}
		\label{S2E19}
		|h_{\omega}(t)-h(t)|\leq \frac{C}{\omega}\exp(C\exp(Ct^{\gamma+2}))\text{ for all }t\geq 0\text{, }\omega>0
	\end{gather}
	upon increasing the constant $C>0$ if necessary.
\end{proof}
\begin{proof}[Proof of \Cref{S2L2}]
	We start by letting $z(x)$ be any $C^1$-vector field and we observe that $z^{\operatorname{Tr}}\cdot Db(x)\cdot z=z(x)\cdot \nabla_zb(x)$ where $\nabla_zb(x)=z^j(x)(\partial_jb^k)(x)e_k$ is the covariant derivative of $b$ along $z$ (we utilise Einstein's summation convention and $e_k$ denote the standard Euclidean basis vectors). We then recall the following calculus identity
	\begin{gather}
		\nonumber
		\nabla (z(x)\cdot b(x))=\nabla_zb+\nabla_bz+z\times \operatorname{curl}(b)+b\times \operatorname{curl}(z).
	\end{gather}
	Consequently $z^{\operatorname{Tr}}\cdot Db(x)\cdot z=z\cdot \nabla (b\cdot z)-z\cdot \nabla_bz-z\cdot (b\times \operatorname{curl}(z))$. By properties of the Levi-Civita connection (or a simple direct calculation) we see that $z\cdot \nabla_bz=b\cdot \nabla \frac{|z|^2}{2}$. Lastly, we now set for our fixed $v\in \mathbb{R}^3$, $z(x):=v\times b(x)$ which implies that $z\cdot b=0$ everywhere and thus $\nabla (z\cdot b)=0$ so that we find
	\begin{gather}
		\nonumber
		z^{\operatorname{Tr}}\cdot Db(x)\cdot z=-b\cdot \nabla \frac{|z|^2}{2}+z\cdot (\operatorname{curl}(z)\times b).
	\end{gather}
	To express $\operatorname{curl}(z)$ in a different way we make use of the identity $\operatorname{curl}(X\times Y)=X\operatorname{div}(Y)-Y\operatorname{div}(X)-[X,Y]$ valid for any two $C^1$-vector fields $X,Y$, where $[X,Y]=\nabla_XY-\nabla_YX$ is the Lie-bracket of $X,Y$. Recall that $z(x)=v\times b(x)$ where $v$ is just some fixed vector and hence independent of the variable $x$, so that we find $\operatorname{curl}(z)=v\operatorname{div}(b)-\nabla_vb$ where we used that $v$ is independent of $x$.
	
	Using the cyclic properties of the inner product, we find
	\begin{gather}
		\nonumber
		z\cdot (\operatorname{curl}(z)\times b)=\operatorname{div}(b)|z|^2+(z\times b)\cdot \nabla_vb.
	\end{gather}
	Using the triple product rule, we further find $z\times b=(v\times b)\times b=(b\cdot v)b-v|b|^2=(b\cdot v)b-v$ since $b$ is a unit field. We overall obtain
	\begin{gather}
		\nonumber
		z^{\operatorname{Tr}}\cdot Db(x)\cdot z=-b\cdot \nabla \frac{|z|^2}{2}+\operatorname{div}(b)|z|^2+(b\cdot v)b\cdot \nabla_vb-v\cdot \nabla_vb.
	\end{gather}
	We notice now that once more by properties of the Levi-Civita connection we have $b\cdot \nabla_vb=v\cdot \nabla \frac{|b|^2}{2}=0$ since $b$ is a unit field. Hence we find
	\begin{gather}
		\nonumber
		z^{\operatorname{Tr}}\cdot Db(x)\cdot z=-b\cdot \nabla \frac{|z|^2}{2}+\operatorname{div}(b)|z|^2-v\cdot \nabla_vb.
	\end{gather}
	Now we note that $|z|^2=|v\times b|^2=|v|^2-(v\cdot b)^2$ since $b$ is a unit field. We conclude that $\nabla |z|^2=-2(v\cdot b)\nabla (v\cdot b)$ where we used that $v$ is a fixed vector independent of $x$. Now, by properties of the Levi-Civita connection we find $b\cdot \nabla (v\cdot b)=b\cdot \nabla_bv+v\cdot \nabla_bb=v\cdot \nabla_bb$ because $v$ is independent of $x$. With this we arrive at
	\begin{gather}
		\nonumber
		z^{\operatorname{Tr}}\cdot Db(x)\cdot z+v\cdot \nabla_vb=\operatorname{div}(b)(x)|z|^2+(v\cdot b)v\cdot \nabla_bb.
	\end{gather}
	Finally, we make use of the vector calculus identity $\nabla_bb=\nabla \frac{|b|^2}{2}+\operatorname{curl}(b)\times b=\operatorname{curl}(b)\times b$ because $b$ is a unit field and arrive at
	\begin{gather}
		\nonumber
		z^{\operatorname{Tr}}\cdot Db(x)\cdot z+v\cdot \nabla_vb=\operatorname{div}(b)(x)|z|^2-(v\cdot b)(v\times b)\cdot\operatorname{curl}(b)
	\end{gather}
	by cyclic properties of the inner product. We are left with noticing that $v\cdot \nabla_vb=v^{\operatorname{Tr}}\cdot Db(x)\cdot v$, recalling that $z(x)=v\times b(x)$ and that $|z|^2=|v|^2-(v\cdot b(x))^2$ since $b$ is a unit field so that we arrive at the claimed identity.
\end{proof}
\section{Proof of \Cref{IT3}}
\begin{proof}[Proof of \Cref{IT3}]
	We start by expressing
	\begin{gather}
		\nonumber
		p(x_{\omega}(t))-p_0=p(x_{\omega}(t))-p(x_0)=p(x_{\omega}(t))-p(x_{\omega}(0))=\int_0^t\frac{d}{ds}p(x_{\omega}(s))ds=\int_0^t(\nabla p)(x_{\omega}(s))\cdot v_{\omega}(s)ds.
	\end{gather}
	We recall that we assume that $B$ satisfies the PDE
	\begin{gather}
		\label{S3E1}
		B\times \operatorname{curl}(B)=\nabla p\text{ in }\Omega\text{, }B,\operatorname{curl}(B)\parallel \partial\Omega
	\end{gather}
	for a domain $\Omega\subset\mathbb{R}^3$ and a suitable $\nabla p\in C^2(\overline{\Omega})$. We then define as usual $b(y):=\frac{B(y)}{|B(y)|},h_{\omega}(t):=v_{\omega}(t)\cdot b(x_{\omega}(t))$, $v^\parallel_{\omega}(t):=h_{\omega}(t)b(x_{\omega}(t))$,$v^{\perp}_{\omega}(t):=v_{\omega}(t)-v^{\parallel}_{\omega}(t)$. With these definitions and due to the relation (\ref{S3E1}) we arrive at
	\begin{gather}
		\nonumber
		p(x_{\omega}(t))-p_0=\int_0^t(\nabla p)(x_{\omega}(s))\cdot v^{\perp}_{\omega}(s)ds.
	\end{gather}
	Now we recall that we have the identity, see (\ref{S2E2}), $v^{\perp}_{\omega}(s)=\frac{b(x_{\omega}(s))\times \dot{v}_{\omega}(s)}{\omega|B(x_{\omega}(s))|}$ which we can insert and integrate by parts
	\begin{gather}
		\nonumber
		p(x_{\omega}(t))-p_0=\frac{\nabla p(x_{\omega}(t))\cdot (b(x_{\omega}(t))\times v_{\omega}(t))}{\omega|B(x_{\omega}(t))|}-\frac{\nabla p(x_0)\cdot (b(x_0)\times v_0)}{\omega |B(x_0)|}
		\\
		\label{S3E2}
		-\frac{1}{\omega}\int_0^tv_{\omega}(s)\cdot \frac{d}{ds}\left(\frac{\nabla p(x_{\omega}(s))\times b(x_{\omega}(s))}{|B(x_{\omega}(s))|}\right)ds.
	\end{gather}
	We compute
	\begin{gather}
		\label{S3E3}
		v_{\omega}(s)\cdot \frac{d}{ds}\left(\frac{\nabla p(x_{\omega}(s))\times b(x_{\omega}(s))}{|B(x_{\omega}(s))|}\right)=(b(x_{\omega}(s))\times v_{\omega}(s))^{\operatorname{Tr}}\cdot \frac{\operatorname{Hess}(p)(x_{\omega}(s))}{|B(x_{\omega}(s))|}\cdot v_{\omega}(s)
		\\
		\nonumber
		+(v_{\omega}(s)\times \nabla p(x_{\omega}(s)))^{\operatorname{Tr}}\cdot \frac{Db(x_{\omega}(s))}{|B(x_{\omega}(s))|}\cdot v_{\omega}(s)+v_{\omega}(s)\cdot (\nabla p(x_{\omega})(s)\times b(x_{\omega}(s)))(\nabla |B|^{-1})(x_{\omega}(s))\cdot v_{\omega}(s).
	\end{gather}
	We start with the Hessian and use the notation $b_{\omega}(s)\equiv b(x_{\omega}(s))$, $\nabla p_{\omega}(s)\equiv (\nabla p)(x_{\omega}(s))$ etc.
	\begin{gather}
		\nonumber
		\int_0^t((b_{\omega}(s)\times v_{\omega}(s)))^{\operatorname{Tr}}\cdot \frac{\operatorname{Hess}(p_{\omega})(s)}{|B_{\omega}(s)|}\cdot v_{\omega}(s)ds
		\\
		\nonumber
		=\int_0^t((b_{\omega}(s)\times v_{\omega}(s)))^{\operatorname{Tr}}\cdot \frac{\operatorname{Hess}(p_{\omega})(s)}{|B_{\omega}(s)|}\cdot v^{\parallel}_{\omega}(s)ds+\int_0^t((b_{\omega}(s)\times v_{\omega}(s)))^{\operatorname{Tr}}\cdot \frac{\operatorname{Hess}(p_{\omega})(s)}{|B_{\omega}(s)|}\cdot v^{\perp}_{\omega}(s)ds.
	\end{gather}
	We write again $v^{\perp}_{\omega}(s)=\frac{b_{\omega}(s)\times \dot{v}_{\omega}(s)}{\omega|B_{\omega}(s)|}$ and integrate by parts. We further make use of the fact that we only consider times $t$ up to which $x_{\omega}(t)$ stays inside of $\Omega$ and since $B$ is non-vanishing on $\mathbb{R}^3$ and $\overline{\Omega}$ is compact, we can estimate the term involving $|B(x_{\omega}(t))|$ from below and above by some constants depending only on $B$ and $\Omega$.
	
	\begin{rem}
		If we would want to deal with possibly unbounded domains $\Omega$, for instance $\Omega=\mathbb{R}^3$, then one would once again need to impose decay conditions on $B$ at infinity akin to \Cref{IT1} to obtain a corresponding convergence rate.
	\end{rem}
	Doing this we find
	\begin{gather}
		\nonumber
		\int_0^t((b_{\omega}(s)\times v_{\omega}(s)))^{\operatorname{Tr}}\cdot \frac{\operatorname{Hess}(p_{\omega})(s)}{|B_{\omega}(s)|}\cdot v^{\perp}_{\omega}(s)ds
		\\
		\nonumber
		=\frac{t\mathcal{O}(1)}{\omega}+\frac{1}{\omega}\int_0^t (b_{\omega}(s)\times \dot{v}_{\omega}(s)\cdot \frac{\operatorname{Hess}(p_{\omega}(s)))}{|B_{\omega}(s)|^2}\cdot (b_{\omega}(s)\times v_{\omega}(s))ds
	\end{gather}
	where $\mathcal{O}(1)$ is a term which can be bounded above by a constant, which depends only on $B,p$ and $\Omega$ but is independent of $t$ and $\omega$, where as usual the time $t$ up to which this estimate is valid coincides with the minimal first exit time of $x_{\omega}(t)$ out of $\Omega$.
	
	We make now use of the defining equation $\dot{v}_{\omega}=\omega v_{\omega}\times B_{\omega}$ and then arrive at the identity
	\begin{gather}
		\nonumber
		\int_0^t((b_{\omega}(s)\times v_{\omega}(s)))^{\operatorname{Tr}}\cdot \frac{\operatorname{Hess}(p_{\omega})(s)}{|B_{\omega}(s)|}\cdot v^{\perp}_{\omega}(s)ds+\int_0^t(v^{\perp}_{\omega}(s))^{\operatorname{Tr}}\cdot \frac{\operatorname{Hess}(p_{\omega}(s))}{|B_{\omega}(s)|}\cdot (b_{\omega}(s)\times v_{\omega}(s))ds=\frac{t\mathcal{O}(1)}{\omega}.
	\end{gather}
	Lastly, we can use the fact that the Hessian is symmetric to deduce that
	\begin{gather}
		\nonumber
		\int_0^t((b_{\omega}(s)\times v_{\omega}(s)))^{\operatorname{Tr}}\cdot \frac{\operatorname{Hess}(p_{\omega})(s)}{|B_{\omega}(s)|}\cdot v^{\perp}_{\omega}(s)ds=\frac{t\mathcal{O}(1)}{\omega}.
	\end{gather}
	Using the fact that $b_{\omega}(s)\times v_{\omega}(s)=-\frac{\dot{v}_{\omega}(s)}{\omega}$, that $v^{\parallel}_{\omega}(s)=h_{\omega}(s)b(x_{\omega}(s))$, that $|v_{\omega}(s)|=|v_0|$ and that, as we showed in the proof of \Cref{IT1}, $|\dot{h}_{\omega}(s)|$ is uniformly bounded it follows easily that
	\begin{gather}
		\nonumber
		\int_0^t((b_{\omega}(s)\times v_{\omega}(s)))^{\operatorname{Tr}}\cdot \frac{\operatorname{Hess}(p_{\omega})(s)}{|B_{\omega}(s)|}\cdot v^{\parallel}_{\omega}(s)ds=\frac{t\mathcal{O}(1)}{\omega}
	\end{gather}
	and so overall we find
	\begin{gather}
		\label{S3E4}
		\int_0^t((b_{\omega}(s)\times v_{\omega}(s)))^{\operatorname{Tr}}\cdot \frac{\operatorname{Hess}(p_{\omega})(s)}{|B_{\omega}(s)|}\cdot v_{\omega}(s)ds=\frac{t\mathcal{O}(1)}{\omega}.
	\end{gather}
	We now turn to the term in (\ref{S3E3}) which contains the Jacobian of $b$
	\begin{gather}
		\nonumber
		\int_0^t(v_{\omega}(s)\times \nabla p_{\omega}(s))^{\operatorname{Tr}}\cdot \frac{Db_{\omega}(s)}{|B_{\omega}(s)|}\cdot v_{\omega}(s)ds.
	\end{gather}
	We decompose as usual $v_{\omega}=v^{\parallel}_{\omega}+v^{\perp}_{\omega}$ and as in the case of the Hessian arrive at an identity of the form
	\begin{gather}
		\nonumber
		\int_0^t(v_{\omega}(s)\times \nabla p_{\omega}(s))^{\operatorname{Tr}}\cdot \frac{Db_{\omega}(s)}{|B_{\omega}(s)|}\cdot v_{\omega}(s)ds=\frac{t\mathcal{O}(1)}{\omega}+\int_0^t(v^{\parallel}_{\omega}(s)\times \nabla p_{\omega}(s))^{\operatorname{Tr}}\cdot \frac{Db_{\omega}(s)}{|B_{\omega}(s)|}\cdot v^{\parallel}_{\omega}(s)ds
		\\
		\nonumber
		+\int_0^t (v^{\perp}_{\omega}(s)\times \nabla p_{\omega}(s))\cdot \frac{Db_{\omega}(s)}{|B_{\omega}(s)|}\cdot v_{\omega}(s)ds.
	\end{gather}
	To further simplify the last term, we utilise once more the identity $v^{\perp}_{\omega}=\frac{b_{\omega}\times \dot{v}_{\omega}}{\omega|B_{\omega}|}$ which leads us to the identity
	\begin{gather}
		\nonumber
		\int_0^t (v^{\perp}_{\omega}(s)\times \nabla p_{\omega}(s))\cdot \frac{Db_{\omega}(s)}{|B_{\omega}(s)|}\cdot v_{\omega}(s)ds=\frac{\mathcal{O}(1)t}{\omega}
		\\
		\nonumber
		+\int_0^t\frac{\nabla p_{\omega}(s)\cdot (Db_{\omega}(s)\cdot (v_{\omega}(s)\times b_{\omega}(s))\times (v_{\omega}(s)\times b_{\omega}(s)))}{|B_{\omega}(s)|}ds.
	\end{gather}
	We will now make use of the following Lemma whose proof we give after the completion of the proof of \Cref{IT3}
	\begin{lem}
		\label{S3L2}
		Let $v\in \mathbb{R}^3$ and $b:\mathbb{R}^3\rightarrow\mathbb{R}^3$ be a unit vector field. Then for every $x\in \mathbb{R}^3$:
		\begin{gather}
			\nonumber
			(Db(x)\cdot (v\times b(x)))\times (v\times b(x))
			\\
			\nonumber
			=\left((b\cdot v)\cdot (v\cdot \operatorname{curl}(b)(x))-|v|^2(b(x)\cdot \operatorname{curl}(b)(x))-\nabla_vb(x)\cdot (v\times b(x))\right)b(x).
			\end{gather}
	\end{lem}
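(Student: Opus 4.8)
The plan is to reduce everything to two elementary facts: that $\nabla_v b$ is everywhere orthogonal to $b$ when $|b|\equiv 1$, and that the cross product of two vectors both orthogonal to a fixed unit vector is a scalar multiple of that unit vector. First I would abbreviate $z:=v\times b(x)$ and observe that $z\perp b(x)$ by construction, while differentiating the identity $|b|^2\equiv 1$ in the direction $z$ gives $b(x)\cdot(Db(x)\cdot z)=0$, i.e. $\nabla_z b(x)=Db(x)\cdot z$ is also orthogonal to $b(x)$. Consequently $(Db(x)\cdot z)\times z$ is parallel to $b(x)$, and since $|b(x)|=1$ one may write $(Db(x)\cdot z)\times z=\big((\nabla_z b(x)\times z)\cdot b(x)\big)b(x)=\big(\nabla_z b(x)\cdot(z\times b(x))\big)b(x)$.

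Next I would compute $z\times b(x)=(v\times b(x))\times b(x)=(v\cdot b(x))b(x)-v$ by the triple-product rule; combined with $\nabla_z b(x)\cdot b(x)=0$ this yields $(Db(x)\cdot z)\times z=-\big(v\cdot\nabla_z b(x)\big)b(x)$, so the lemma follows once $v\cdot\nabla_z b(x)$ is identified with $\nabla_v b(x)\cdot(v\times b(x))+|v|^2\big(b(x)\cdot\operatorname{curl}(b)(x)\big)-\big(v\cdot b(x)\big)\big(v\cdot\operatorname{curl}(b)(x)\big)$. For this I would apply the same calculus identity already used in the proof of \Cref{S2L2}, namely $\nabla(X\cdot Y)=\nabla_X Y+\nabla_Y X+X\times\operatorname{curl}(Y)+Y\times\operatorname{curl}(X)$, with $X=v$ a constant vector and $Y=b$; the terms $\nabla_b v$ and $\operatorname{curl}(v)$ vanish, leaving $\nabla(v\cdot b)=\nabla_v b+v\times\operatorname{curl}(b)$. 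Taking the Euclidean inner product of this identity with $z$ and using $z\cdot\nabla(v\cdot b)=v\cdot\nabla_z b$ (again because $v$ is constant) together with $z\cdot(v\times\operatorname{curl}(b))=\operatorname{curl}(b)\cdot(z\times v)$ gives $v\cdot\nabla_z b(x)=\nabla_v b(x)\cdot z+\operatorname{curl}(b)(x)\cdot(z\times v)$. A final application of the triple-product rule, $z\times v=(v\times b(x))\times v=|v|^2 b(x)-(v\cdot b(x))v$, turns the last term into $|v|^2\big(b(x)\cdot\operatorname{curl}(b)(x)\big)-\big(v\cdot b(x)\big)\big(v\cdot\operatorname{curl}(b)(x)\big)$, and substituting back and recalling $z=v\times b(x)$ recovers the claimed formula.

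I do not expect a genuine obstacle here: the statement is essentially bookkeeping built on the orthogonality $\nabla_{(\cdot)}b\perp b$ and one standard vector-calculus identity. The only place requiring care is the sign and orientation conventions in the triple products and in the scalar triple product $\big((\nabla_z b)\times z\big)\cdot b=\nabla_z b\cdot(z\times b)$, so I would carry those out explicitly rather than appeal to symmetry.
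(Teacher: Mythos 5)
Your proof is correct, and it takes a genuinely different (and somewhat leaner) route than the paper's. The paper starts from $0=\nabla\bigl(b\cdot(v\times b)\bigr)$, expands via the product rule for gradients of dot products together with $\operatorname{curl}(v\times b)=-\nabla_v b+v\operatorname{div}(b)$, solves for the full vector $\nabla_{v\times b}b$, and only then crosses with $v\times b$ and unwinds the resulting double cross products (the $\operatorname{div}(b)$ term dying because $(v\times b)\times(v\times b)=0$). You instead observe at the outset that both $\nabla_z b$ and $z:=v\times b$ lie in $b^{\perp}$, so the left-hand side is automatically a multiple of the unit vector $b$; this collapses the whole computation to the single scalar $\bigl(\nabla_z b\times z\bigr)\cdot b=\nabla_z b\cdot(z\times b)=-v\cdot\nabla_z b$, which you then evaluate by applying the same gradient identity to the simpler pair $\nabla(v\cdot b)=\nabla_v b+v\times\operatorname{curl}(b)$ and contracting with $z$. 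What your approach buys is the avoidance of the $\operatorname{curl}(v\times b)$ identity and of the nested triple-product expansions; what the paper's buys is an explicit intermediate formula for $\nabla_{v\times b}b$ itself (its equation for $\nabla_{v\times b}b$ in terms of $\operatorname{curl}(b)$, $\nabla_v b$ and $\operatorname{div}(b)$), which is of independent use. All the sign-sensitive steps in your argument check out: $z\times b=(v\cdot b)b-v$, $z\times v=|v|^2b-(v\cdot b)v$, and the orthogonality $\nabla_z b\perp b$ from differentiating $|b|^2\equiv 1$ are all used correctly, and the final coefficient matches the statement.
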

	Using \Cref{S3L2} and keeping in mind that $\nabla p$ and $b$ are everywhere orthogonal we conclude that
	\begin{gather}
		\nonumber
		\int_0^t (v^{\perp}_{\omega}(s)\times \nabla p_{\omega}(s))\cdot \frac{Db_{\omega}(s)}{|B_{\omega}(s)|}\cdot v_{\omega}(s)ds=\frac{\mathcal{O}(1)t}{\omega}.
	\end{gather}
	Using further that $v^{\parallel}_{\omega}(s)=h_{\omega}(s)b_{\omega}(s)$ and that $Db_{\omega}\cdot b_{\omega}=\nabla_{b}b(x_{\omega}(s))$ we may insert all our previous findings into (\ref{S3E2}) to deduce that
	\begin{gather}
		\nonumber
		p(x_{\omega}(t))-p_0=\frac{\nabla p(x_{\omega}(t))\cdot (b(x_{\omega}(t))\times v_{\omega}(t))}{\omega|B_{x_{\omega}(t)}|}-\frac{\nabla p(x_0)\cdot (b(x_0)\times v_0)}{\omega|B(x_0)|}
		\\
		\nonumber
		-\frac{1}{\omega}\int_0^tv_{\omega}(s)\cdot (\nabla p(x_{\omega}(s))\times b(x_{\omega}(s)))(\nabla |B|^{-1})(x_{\omega}(s))\cdot v_{\omega}(s)ds
		\\
		\label{S3E5}
		-\frac{1}{\omega}\int_0^th^2_{\omega}(s)(b(x_{\omega}(s))\times \nabla p(x_{\omega}(s)))^{\operatorname{Tr}}\cdot \frac{\nabla_bb(x_{\omega}(s))}{|B(x_{\omega}(s))|}ds+\frac{t\mathcal{O}(1)}{\omega^2}.
	\end{gather}
	To handle the last two terms we observe first that $\nabla_bb=\operatorname{curl}(b)\times b+\nabla \frac{|b|^2}{2}=\operatorname{curl}(b)\times b$ and hence $(b\times \nabla p)\cdot (\operatorname{curl}(b)\times b)=(b\cdot \operatorname{curl}(b))(b\cdot \nabla p)-|b|^2(\nabla p \cdot \operatorname{curl}(b))=-\nabla p\cdot \operatorname{curl}(b)$ where we used that $b$ is a unit field and that $b$ and $\nabla p$ are orthogonal. Further we find $\operatorname{curl}(b)=\operatorname{curl}\left(\frac{B}{|B|}\right)=\frac{\operatorname{curl}(B)}{|B|}+\nabla |B|^{-1}\times B$ and so $-\nabla p\cdot \operatorname{curl}(b)=(\nabla p\times B)\cdot \nabla |B|^{-1}$ where we used that $\operatorname{curl}(B)$ and $\nabla p$ are also everywhere orthogonal.
	
	We can insert this in the last integral term in (\ref{S3E5}) to arrive at
	\begin{gather}
		\nonumber
			p(x_{\omega}(t))-p_0=\frac{\nabla p(x_{\omega}(t))\cdot (b(x_{\omega}(t))\times v_{\omega}(t))}{\omega|B_{x_{\omega}(t)}|}-\frac{\nabla p(x_0)\cdot (b(x_0)\times v_0)}{\omega|B(x_0)|}
		\\
		\nonumber
		-\frac{1}{\omega}\int_0^tv_{\omega}(s)\cdot (\nabla p(x_{\omega}(s))\times b(x_{\omega}(s)))(\nabla |B|^{-1})(x_{\omega}(s))\cdot v_{\omega}(s)ds
		\\
		\label{S3E6}
		-\frac{1}{\omega}\int_0^th^2_{\omega}(s)(\nabla p_{\omega}(s)\times b_{\omega}(s))\cdot \nabla |B|^{-1}(x_{\omega}(s))ds+\frac{t\mathcal{O}(1)}{\omega^2}.
	\end{gather}
	Now we rearrange the terms in the second line in (\ref{S3E6}) as follows
	\begin{gather}
		\nonumber
		\int_0^tv_{\omega}(s)\cdot (\nabla p(x_{\omega}(s))\times b(x_{\omega}(s)))(\nabla |B|^{-1})(x_{\omega}(s))\cdot v_{\omega}(s)ds
		\\
		\nonumber
		=-\int_0^t(v_{\omega}(s)\times b_{\omega}(s))\cdot \nabla p(x_{\omega}(s))(\nabla |B|^{-1})(x_{\omega}(s))\cdot v_{\omega}(s)ds.
	\end{gather}
	Now we can use the defining equations (\ref{E1}) and write $v_{\omega}\times b_{\omega}=\frac{\dot{v}_{\omega}}{\omega|B_{\omega}|}$ so that an integration by parts yields
	\begin{gather}
		\nonumber
		\int_0^tv_{\omega}(s)\cdot (\nabla p(x_{\omega}(s))\times b(x_{\omega}(s)))(\nabla |B|^{-1})(x_{\omega}(s))\cdot v_{\omega}(s)ds
		\\
		\label{S3E7}
		=\frac{t\mathcal{O}(1)}{\omega}+\int_0^t(v_{\omega}(s)\cdot \nabla p_{\omega}(s))\nabla |B|^{-1}(x_{\omega}(s))\cdot (v_{\omega}(s)\times b_{\omega}(s))ds.
	\end{gather}
	Now we use the relation (\ref{S3E1}) which states $B\times \operatorname{curl}(B)=\nabla p$ so that $\nabla p\times b=(B\times \operatorname{curl}(B))\times b=|B|\operatorname{curl}(B)-(b\cdot \operatorname{curl}(B))B$ and hence we obtain the following identities
	\begin{gather}
		\nonumber
		\int_0^t(v_{\omega}(s)\cdot \nabla p_{\omega}(s))\nabla |B|^{-1}(x_{\omega}(s))\cdot (v_{\omega}(s)\times b_{\omega}(s))ds
		\\
		\nonumber
		=\int_0^t(v_{\omega}(s)\times b_{\omega}(s))\cdot \operatorname{curl}(B)(x_{\omega}(s))(v_{\omega}(s)\times b_{\omega}(s))\cdot (\nabla |B_{\omega}|^{-1}) |B_{\omega}(s)|ds\text{ and }
		\\
		\nonumber
		\int_0^tv_{\omega}(s)\cdot (\nabla p(x_{\omega}(s))\times b(x_{\omega}(s)))(\nabla |B|^{-1})(x_{\omega}(s))\cdot v_{\omega}(s)ds
		\\
		\nonumber
		=\int_0^t(v_{\omega}(s)\cdot \operatorname{curl}(B_{\omega}))v_{\omega}(s)\cdot (\nabla |B_{\omega}|^{-1})|B_{\omega}(s)|ds-\int_0^t(v_{\omega}(s)\cdot B_{\omega}(s))(b_{\omega}(s)\cdot \operatorname{curl}(B_{\omega}))(v_{\omega}(s)\cdot \nabla |B_{\omega}|^{-1})ds.
	\end{gather}
	We can combine these two identities together with (\ref{S3E7}) to arrive at
	\begin{gather}
		\nonumber
		2\int_0^tv_{\omega}(s)\cdot (\nabla p(x_{\omega}(s))\times b(x_{\omega}(s)))(\nabla |B|^{-1})(x_{\omega}(s))\cdot v_{\omega}(s)ds
		\\
		\nonumber
		=\frac{t\mathcal{O}(1)}{\omega}-\int_0^th_{\omega}(s)(b_{\omega}(s)\cdot  \operatorname{curl}(B_{\omega}))(v_{\omega}(s)\cdot \nabla |B_{\omega}|^{-1})|B_{\omega}(s)|ds
		\\
		\nonumber
		+\int_0^t\left[((v_{\omega}\times b_{\omega})\cdot \operatorname{curl}(B_{\omega}))(v_{\omega}\times b_{\omega})+(v_{\omega}\cdot \operatorname{curl}(B_{\omega}))v_{\omega}\right]\cdot (\nabla |B_{\omega}|^{-1})|B_{\omega}|ds
		\\
		\nonumber
		=\frac{t\mathcal{O}(1)}{\omega}+\int_0^t\left[((v_{\omega}\times b_{\omega})\cdot \operatorname{curl}(B_{\omega}))(v_{\omega}\times b_{\omega})+(v^{\perp}_{\omega}\cdot \operatorname{curl}(B_{\omega}))v_{\omega}\right]\cdot (\nabla |B_{\omega}|^{-1})|B_{\omega}|ds.
	\end{gather}
	Now we observe that $|v_{\omega}(s)|=|v_0|$ so that $|v_{\omega}\times b_{\omega}|^2=|v_0|^2-h^2_{\omega}$ and $|v^{\perp}_{\omega}|^2=|v_{\omega}|^2-|v^{\parallel}_{\omega}|^2=|v_0|^2-h^2_{\omega}$. From this it follows that
	\begin{gather}
		\nonumber
		((v_{\omega}\times b_{\omega})\cdot \operatorname{curl}(B_{\omega}))(v_{\omega}\times b_{\omega})+(v^{\perp}_{\omega}\cdot \operatorname{curl}(B_{\omega}))v^{\perp}_{\omega}
		\\
		\nonumber
		=(|v_0|^2-h^2_{\omega}(s))(\operatorname{curl}(B)(x_{\omega}(s))-(b_{\omega}(s)\cdot \operatorname{curl}(B_{\omega}(s))b_{\omega})
	\end{gather}
	since $\left\{b_{\omega},\frac{v_{\omega}\times b_{\omega}}{\sqrt{|v_0|^2-h^2_{\omega}}},\frac{v^{\perp}_{\omega}}{\sqrt{|v_0|^2-h^2_{\omega}}}\right\}$ forms an orthonormal basis. Combining these considerations we find
	\begin{gather}
		\nonumber
		2\int_0^tv_{\omega}(s)\cdot (\nabla p(x_{\omega}(s))\times b(x_{\omega}(s)))(\nabla |B|^{-1})(x_{\omega}(s))\cdot v_{\omega}(s)ds
		\\
		\nonumber
		=\frac{t\mathcal{O}(1)}{\omega}+\int_0^t(|v_0|^2-h^2_{\omega}(s))\left(\operatorname{curl}(B_{\omega})-(b_{\omega}\cdot \operatorname{curl}(B_{\omega}))b_{\omega}\right)\cdot (\nabla |B_{\omega}|^{-1})|B_{\omega}(s)|ds
		\\
		\nonumber
		+\int_0^t(v^{\perp}_{\omega}\cdot \operatorname{curl}(B_{\omega}))v^{\parallel}_{\omega}(s)\cdot (\nabla |B_{\omega}|^{-1})|B_{\omega}|ds.
	\end{gather}
	We see that the last integral is of order $\frac{t\mathcal{O}(1)}{\omega}$ by means of our usual integration by parts argument. We finally observe that $|B|(\operatorname{curl}(B)-(b\cdot \operatorname{curl}(B))b)=b\times (\operatorname{curl}(B)\times B)=\nabla p\times b$ due to the identity $B\times \operatorname{curl}(B)=\nabla p$. We therefore conclude
	\begin{gather}
		\nonumber
			2\int_0^tv_{\omega}(s)\cdot (\nabla p(x_{\omega}(s))\times b(x_{\omega}(s)))(\nabla |B|^{-1})(x_{\omega}(s))\cdot v_{\omega}(s)ds
		\\
		\nonumber
		=\frac{t\mathcal{O}(1)}{\omega}+\int_0^t(|v_0|^2-h^2_{\omega}(s))(\nabla p_{\omega}(s)\times b_{\omega}(s))\cdot \nabla |B|^{-1}(x_{\omega}(s))ds.
	\end{gather}
	We can insert this into (\ref{S3E6}) and find
	\begin{gather}
		\nonumber
		p(x_{\omega}(t))-p_0=\frac{\nabla p(x_{\omega}(t))\cdot (b(x_{\omega}(t))\times v_{\omega}(t))}{\omega|B(x_{\omega}(t))|}-\frac{\nabla p(x_0)\cdot (b(x_0)\times v_0)}{\omega|B(x_0)|}
		\\
		\label{S3E8}
		-\frac{1}{\omega}\int_0^t\frac{|v_0|^2+h^2_{\omega}(s)}{2}(\nabla p(x_{\omega}(s))\times b(x_{\omega}(s)))\cdot \nabla |B|^{-1}(x_{\omega}(s))ds+\frac{t\mathcal{O}(1)}{\omega^2}
	\end{gather}
	where the constant in $\mathcal{O}(1)$ will depend on $\Omega,B$ but is independent of $\omega$ or $t$ (as long as the trajectory $x_{\omega}$ stays within $\Omega$ up to time $t$).
	
	It now follows from \Cref{IT1} and its proof that $x_{\omega}$ as well as $h_{\omega}$ converge in $C^{0,\alpha}_{\operatorname{loc}}(\mathbb{R})$ to the limit trajectory $x(t)$ and limit function $h(t)$, where $\dot{x}(t)=h(t)b(x(t))$ and $x(0)=x_0$. Knowing this, we conclude from (\ref{S3E8})
	\begin{gather}
		\nonumber
		p(x_{\omega}(t))-p_0=\frac{\nabla p(x(t))\cdot (b(x(t))\times v_{\omega}(t))}{\omega|B(x(t))|}-\frac{\nabla p(x_0)\cdot (b(x_0)\times v_0}{\omega|B(x_0)|}
		\\
		\nonumber
		-\frac{1}{\omega}\int_0^t\frac{|v_0|^2+h^2(s)}{2}(\nabla p(x(s))\times b(x(s)))\cdot \nabla (|B|^{-1})(x(s))ds
		\\
		\label{S3E9}
		+\frac{t\mathcal{O}(1)}{\omega^2}+\frac{\mathcal{O}(1)}{\omega}\left(|x_{\omega}(t)-x(t)|+|h_{\omega}(t)-h(t)|\right).
	\end{gather}
	Lastly, using the definition of the magnetic moment $\mu(t)=\frac{|v_0|^2-h^2(t)}{2|B(x(t))|}$ and that it is preserved in time, c.f. \Cref{IC2}, we arrive at the expression in \Cref{IT3}. The convergence rate claim then follows from \Cref{IT1} and its proof, c.f. \ref{S2E18},(\ref{S2E19}) where we also established a convergence rate for $|h_{\omega}(t)-h(t)|$. Note that we may take $\gamma=0$ because we only consider times for which $x_{\omega}(t)$ remains in $\Omega$ so that the terms of the form $|B(x_{\omega}(t))|^{r}$ for $r\in \mathbb{R}$ may be estimated independently of $\omega$ and $t$ and hence there is also no need to impose decay conditions on the magnetic field in question.
\end{proof}
\begin{proof}[Proof of \Cref{S3L2}]
	The derivation is similar in spirit to that of \Cref{S2L2}. We start by exploiting the identity
	\begin{gather}
		\nonumber
		0=\nabla (b\cdot (v\times b))=\nabla_{v\times b}b+\nabla_b(v\times b)+(v\times b)\times \operatorname{curl}(b)+b\times (\operatorname{curl}(v\times b))
		\\
		\label{S3E10}
		=\nabla_{v\times b}b+v\times \nabla_bb+(v\times b)\times \operatorname{curl}(b)+b\times \operatorname{curl}(v\times b)
	\end{gather}
	where we used that $\nabla_b(v\times b)=\nabla_bv\times b+v\times \nabla_bb$ and $\nabla_bv=0$ since $v$ is a fixed vector independent of $x$. We now use the identity $\operatorname{curl}(v\times b)=-[v,b]+v\operatorname{div}(b)-b\operatorname{div}(v)=-\nabla_vb+v\operatorname{div}(b)$ where we used again that $v$ is independent of $x$. We conclude
	\begin{gather}
		\nonumber
		\nabla_{v\times b}b=\nabla_bb\times v+\operatorname{curl}(b)\times (v\times b)-\nabla_vb\times b+\operatorname{div}(b)v\times b.
	\end{gather}
	Now we note that $\nabla_bb=\operatorname{curl}(b)\times b$ since $b$ is a unit field and so we find $\nabla_bb\times v+\operatorname{curl}(b)\times (v\times b)=(\operatorname{curl}(b)\times v)\times b$. We obtain
	\begin{gather}
		\label{S3E11}
		\nabla_{v\times b}b=(\operatorname{curl}(b)\times v)\times b-\nabla_vb\times b+\operatorname{div}(b)v\times b.
	\end{gather}
	Now we observe that $Db\cdot (v\times b)=\nabla_{v\times b}b$ so that
	\begin{gather}
		\label{S3E12}
		(Db\cdot (v\times b))\times (v\times b)=((\operatorname{curl}(b)\times v)\times b)\times (v\times b)-(\nabla_vb\times b)\times (v\times b).
	\end{gather}
	We compute $(\operatorname{curl}(b)\times v)\times b=(b\cdot \operatorname{curl}(b))v-(b\cdot v)\operatorname{curl}(b)$ and so $((\operatorname{curl}(b)\times v)\times b)\times (v\times b)=(b\cdot v)(v\cdot \operatorname{curl}(b))b-|v|^2(b\cdot \operatorname{curl}(b))b$. On the other hand $(\nabla_vb\times b)\times (v\times b)=(\nabla_vb\cdot (v\times b))b$ since $b$ and $v\times b$ are orthogonal. Therefore (\ref{S3E12}) becomes
	\begin{gather}
		\nonumber
		(Db\cdot (v\times b))\times (v\times b)=(b\cdot v)(v\cdot \operatorname{curl}(b))b-|v|^2(b\cdot \operatorname{curl}(b))b-(\nabla_vb\cdot (v\times b))b
	\end{gather}
	which is the claimed identity.
\end{proof}
\section{Proof of \Cref{IT6}}
\begin{proof}[Proof of \Cref{IT6}]
	According to \Cref{IT3} we have to explicitly compute the integral
	\begin{gather}
		\nonumber
		\int_0^t\left(\frac{|v_0|^2}{|B(x(s))|}-\mu_0\right)(B(x(s))\times \nabla p(x(s)))\cdot \nabla |B|^{-1}(x(s))ds.
	\end{gather}
	To this end we recall from the proof of \Cref{IT1} that
	\begin{gather}
		\nonumber
		\dot{x}(t)=h(t)b(x(t))\text{ and }h^2(t)=|v_0|^2-2\mu_0|B(x(t))|
	\end{gather}
	which follows from the fact that the magnetic moment is preserved, c.f. \Cref{IC2}, and where $b(y)=\frac{B(y)}{|B(y)|}$. We recall that we assume the particle to be passing, meaning that $(x_0,v_0)$ is such that $|v_0|^2-2\mu_0|B(y)|>0$ for every $y\in \Sigma:=\{z\in \Omega\mid p(z)=p(x_0)\}$. Since $B\parallel \Sigma$ and $x(t)$ follows magnetic field lines, we see that $|v_0|^2-2\mu_0|B(x(t))|\geq c_0>0$ for all $t$. We hence obtain
	\begin{gather}
		\nonumber
		\int_0^t\left(\frac{|v_0|^2}{|B(x(s))|}-\mu_0\right)(B(x(s))\times \nabla p(x(s)))\cdot \nabla |B|^{-1}(x(s))ds
		\\
		\nonumber
		=\int_0^t\frac{|v_0|^2-|B(x(s))|\mu_0}{\sqrt{|v_0|^2-2\mu_0|B(x(s))|}}\dot{x}(s)\cdot \left(\nabla p(x(s))\times \nabla |B|^{-1}(x(s))\right)ds
		\\
		\nonumber
		=\int_{x[0,t]}\frac{|v_0|^2-\mu_0|B|}{\sqrt{|v_0|^2-2\mu_0|B|}}\nabla p\times \nabla |B|^{-1}
		\end{gather}
	where the latter is a line integral and we assume w.l.o.g. that $h(0)>0$. Since $x$ follows a magnetic field line and so do the integral curves of $\eta=\frac{B}{|B|^2}$ we see that for every $t>0$ there will be a unique $\rho(t)>0$ such that if we let $\gamma$ denote the integral curve of $\eta$ starting at $x_0$, then $\gamma([0,\rho])=x([0,t])$ and both curves (in case the field lines are periodic) traversed the point $x_0$ the same amount of times. We can then write
	\begin{gather}
	\nonumber
	\int_0^t\left(\frac{|v_0|^2}{|B(x(s))|}-\mu_0\right)(B(x(s))\times \nabla p(x(s)))\cdot \nabla |B|^{-1}(x(s))ds
	\\
	\nonumber
	=\int_0^{\rho}\frac{|v_0|^2-\mu_0|B(\gamma(s))|}{\sqrt{|v_0|^2-2\mu_0|B(\gamma(s))|}}\xi(\gamma(s))\cdot \nabla |B|^{-1}(\gamma(s))ds
\end{gather}
where $\xi:=\eta \times \nabla p$. We can now define the function
\begin{gather}
	\nonumber
	F(\sigma):=\int_{|B(x_0)|}^{\sigma}\frac{\mu_0\tau-|v_0|^2}{\sqrt{|v_0|^2-2\mu_0\tau}}\frac{1}{\tau^2}d\tau
\end{gather}
and observe that
\begin{gather}
	\nonumber
	\int_0^t\left(\frac{|v_0|^2}{|B(x(s))|}-\mu_0\right)(B(x(s))\times \nabla p(x(s)))\cdot \nabla |B|^{-1}(x(s))ds
	\\
	\nonumber
	=\int_0^\rho \xi(\gamma(s))\cdot \nabla (F(|B|)(\gamma(s))ds.
\end{gather}
We now exploit the fact that $B$ is quasi-symmetric, meaning that if we change into Boozer-coordinates $(p,\phi,\theta)$ we find $\xi(p,\phi,\theta)=a(p)\partial_{\phi}+c(p)\partial_{\theta}$ and $|B(p,\phi,\theta)|=f(p,M\phi+N\theta)$ for a suitable function $f:I\times S^1\rightarrow [0,\infty)$. Since we work on a fixed pressure level set, we will from now on suppress the $p$ dependence in the notation. We can then set $\lambda(s):=(F\circ f)(s)$ and see that $\xi(p,\phi,\theta)\cdot \nabla (F(|B|))(p,\phi,\theta)=a\partial_{\phi}(\lambda(M\phi+N\theta))+b\partial_{\theta}(\lambda(M\phi+N\theta))=(aM+cN)\lambda^{\prime}(M\phi+N\theta)$. We also observe that in Boozer coordinates the field line $\gamma$ of $\eta$ has the expression $\gamma(s)=(\alpha s+\phi_0,\beta s+\theta_0)$ where the point $(\phi_0,\theta_0)$ corresponds to the initial condition $x_0$. We let $m:=\alpha M+\beta N$, $c_0:=M\phi_0+N\theta_0$ and overall arrive at
\begin{gather}
	\label{S4E1}
	\int_0^t\left(\frac{|v_0|^2}{|B(x(s))|}-\mu_0\right)(B(x(s))\times \nabla p(x(s)))\cdot \nabla |B|^{-1}(x(s))ds=(aM+cN)\int_0^{\rho}\lambda^{\prime}(ms+c_0)ds.
\end{gather}
Now, if $m\neq 0$ we can perform a change of coordinates leading us to
\begin{gather}
	\nonumber
	\int_0^{\rho}\lambda^{\prime}(ms+c_0)ds=\frac{\int_{c_0}^{m\rho+c_0}\lambda^{\prime}(\tau)d\tau}{m}=\frac{\lambda(m\rho+c_0)-\lambda(c_0)}{m}=\frac{F(|B|(\gamma(\rho)))}{m}=\frac{F(|B|(x(t)))}{m}
\end{gather}
where we used that $F(|B(x_0)|)=0$ and that $\gamma(\rho)=x(t)$ by choice of $\rho$. We thus arrive at
\begin{gather}
	\nonumber
	\int_0^t\left(\frac{|v_0|^2}{|B(x(s))|}-\mu_0\right)(B(x(s))\times \nabla p(x(s)))\cdot \nabla |B|^{-1}(x(s))ds=\frac{aM+cN}{\alpha M+\beta N}F(|B|(x(t))).
\end{gather}
Lastly we observe that we can find an explicit primitive expressing
\begin{gather}
	\nonumber
	F(s)=\frac{\sqrt{|v_0|^2-2\mu_0s}}{s}-\frac{\sqrt{|v_0|^2-2\mu_0|B(x_0)|}}{|B(x_0)|}.
\end{gather}
Recall, keeping in mind magnetic moment preservation, c.f. \Cref{IC2}, that $\sqrt{|v_0|^2-2\mu_0|B(x(t))|}=h(t)$ and hence the claimed formula follows in the case $m\neq 0$.
\newline
Now on the other hand, if $m=0$, then (\ref{S4E1}) becomes
\begin{gather}
	\nonumber
	\int_0^t\left(\frac{|v_0|^2}{|B(x(s))|}-\mu_0\right)(B(x(s))\times \nabla p(x(s)))\cdot \nabla |B|^{-1}(x(s))ds=(aM+cN)\rho(t)\lambda^{\prime}(c_0).
\end{gather}
We recall that $\lambda(s)=(F\circ f)(s)$ and so $\lambda^{\prime}(c_0)=F^{\prime}(|B(x_0)|)f^{\prime}(\phi_0 M+\theta_0 N)$ which also immediately yields the claimed formula.
\newline
We are left with establishing a more explicit relation between $\rho(t)$ and $t$. To this end we compute
\begin{gather}
	\nonumber
	\rho=\int_{\gamma[0,\rho]}B=\int_{x[0,t]}B=\int_0^t\dot{x}(s)\cdot B(x(s))ds=\int_0^t h(s)|B(x(s))|
\end{gather}
where we used the defining properties of $\rho$. Using the magnetic moment we can express $h$ in terms of $|B(x(s))|$ which concludes the proof.
\end{proof}
\section*{Acknowledgements}
This work has been partly supported by the  ANR-DFG project “CoRoMo”
ANR-22-CE92-0077-01, and has received financial support from the CNRS through the MITI interdisciplinary programs. This work has been supported by the Inria AEX StellaCage. The research was supported in part by the MIUR Excellence Department Project awarded to Dipartimento di Matematica, Università di Genova, CUP D33C23001110001.
\bibliographystyle{plain}
\bibliography{mybibfileNOHYPERLINK}
\footnotesize
\end{document}